\newtheorem{proof}{Proof}
\newtheorem{theorem}{Theorem}
\newtheorem{corollary}{Corollary}
\begin{document}
\title{Distributed Frequency Control in Power Grids \\ Under Limited Communication \thanks{This work was supported by DTRA grants HDTRA1-13-1-0021 and HDTRA1-14-1-0058.}}

\author{\IEEEauthorblockN{Marzieh Parandehgheibi \IEEEauthorrefmark{1}, Konstantin Turitsyn \IEEEauthorrefmark{2} and Eytan Modiano \IEEEauthorrefmark{1}}
\IEEEauthorblockA{\IEEEauthorrefmark{1}Laboratory for Information and Decision Systems, Massachusetts Institute of Technology, Cambridge, MA, USA\\
\IEEEauthorrefmark{2}Department of Mechanical Engineering, Massachusetts Institute of Technology, Cambridge, MA, USA\\}}

\maketitle 

\begin{abstract}
In this paper, we analyze the impact of communication failures on the performance of optimal distributed frequency control. We consider a consensus-based control scheme, and show that it does not converge to the optimal solution when the communication network is disconnected. We propose a new control scheme that uses the dynamics of power grid to replicate the information not received from the communication network, and prove that it achieves the optimal solution under any single communication link failure. In addition, we show that this control improves cost under multiple communication link failures. 

Next, we analyze the impact of discrete-time communication on the performance of distributed frequency control. In particular, we will show that the convergence time increases as the time interval between two messages increases. We propose a new algorithm that uses the dynamics of the power grid, and show through simulation that it improves the convergence time of the control scheme significantly. 
\end{abstract}

\section{Introduction}
The main objective of a power grid is to generate power, and transmit it to the consumers. The power grid balances supply and demand  through frequency control. This is done both at the local (generator) level, and the wide-area level as follows.
\begin{enumerate}
\item \textbf{Primary Frequency Control (Droop Control):} A local frequency controller that balances the power by speeding up or slowing down the generators; i.e. creating deviation from the 60Hz nominal frequency; this controller responds to the changes in power within milliseconds to seconds.
\item \textbf{Secondary Frequency Control (AGC):} AGC is a centralized frequency controller that re-adjusts the set points of generators to balance the power and restores the nominal frequency; this is a close-loop automatic controller that is applied every 2-4 seconds and requires communication network between AGC and generators.
\item \textbf{Economic Dispatch:} This is a centralized controller that reschedules the generators to minimize the cost of generation; this control decision is made by the ISO every 10-15 minutes, and requires communication network between ISO and generators.
\end{enumerate}

The future power grid is going to integrate renewable energy resources. This will increase the fluctuations in the generation, and requires more reserve capacity to balance the power. One of the approaches to balancing power without having large reserve capacities is demand response, where loads are ``adjustable'' and participate in balancing the power. Since the number of loads is large, they cannot be controlled in a centralized manner. Thus, it is essential to use ``distributed'' control for demand response that incorporates all three stages of traditional frequency control.

Recently, there have been many attempts to develop distributed frequency control mechanisms. In \cite{dominguez2010coordination}, the authors consider the case that the total amount of required power is known, and designed a distributed algorithm that determines the amount of load participation to minimize the cost. In \cite{andreasson2013distributed}, the authors design a distributed frequency controller which balances the power under unknown changes in the amount of generation and load, and compare its performance with a centralized controller. 

In \cite{simpson2013synchronization}, the authors propose a primary control mechanism, similar to the droop control, for microgrids leading to a desirable distribution of power among the participants, and propose a distributed integral controller to balance the power. These results are extended in \cite{dorfler2014breaking}, where the authors use a similar averaging-based distributed algorithm to incorporate all three stages of frequency control in microgrids. Moreover, in \cite{zhao2015distributed}, the authors propose a similar consensus-based algorithm for optimal frequency control in transmission power grid.

In \cite{zhao2014design}, the authors use a primal-dual algorithm to design a primary frequency control for demand response in power grid. The results are extended in \cite{mallada2014distributed} and \cite{mallada2014optimal}, where the authors design a primal-dual algorithm to model all three stages of a traditional frequency control in the power grid.

Although there exist several different distributed frequency control mechanisms in the literature, they all rely on the use of communication to exchange control information (e.g., Lagrangian multipliers). Moreover, convergence to an optimal solution requires the underlying communication network to be connected. In addition, in the design and analysis of all these controllers, it is assumed that the communication messages between neighboring nodes are transmitted in continuous time; however, in practice, these messages will be transmitted in discrete time. In this paper, we analyze the performance of a consensus-based control scheme under communication failures. We show that when the communication network is disconnected, the control scheme balances the power by retrieving the normal frequency; however, its cost is not optimal. Moreover, we analyze the effect of discrete-time communication on the convergence time of this control scheme. 

Next, we propose a novel control algorithm which uses the information from the power flow to replicate the direct information received from the communication network. We prove that our algorithm achieves the optimal solution under any single communication link failure. We also show via simulation results that our algorithm improves the cost under multiple communication failures. Finally, we propose a sequential algorithm based on our control mechanism, and show that it improves the convergence time under discrete-time communication.

The rest of this paper is organized as follows. In Section \ref{Model_Sec}, we describe the power grid's model. In Section \ref{Distributed_Sec}, we describe a consensus-based distributed frequency control, and analyze it under communication link failures and discrete-time communication. In Section \ref{TwoBus_Sec}, we will propose a novel decentralized control for a two-node system and prove its optimality and stability, and in Section \ref{CommFailures_Sec}, we extend our control mechanism for multi-node systems under disconnected communication networks. Next, in Section \ref{DiscreteComm_Sec}, we propose a sequential control algorithm that improves the convergence time under discrete-time communication. Finally, we conclude in Section \ref{Conclusion_Sec}. 

\section{System Model}\label{Model_Sec}

Let $\mathcal{G_P=\{N_P,E_P\}}$ be the power grid, where $\mathcal{N_P}$ denotes the set of power nodes, and $\mathcal{E_P}$ denotes the set of power lines. The power at every node $j$, whether it is a generator or a load, consists of adjustable and unadjustable parts. The unadjustable part is the amount of power that cannot be changed; i.e. fixed demand or generation. The adjustable part is the amount of power that can be changed; i.e. controllable load or generation. The sum of the total power determines the amount of power imbalance in the grid, which leads to the frequency deviation. The role of a controller is to balance the power by using the adjustable power at all nodes with minimum cost. Next, we describe the dynamics of the power grid which translate the power imbalance to frequency deviation. Then, we describe the optimal control policy. 

Let $M_j$ be the inertia of node $j$, and $D_j$ be the droop coefficient of node $j$. Moreover, let $p_j(t)$ be the unadjustable power and $u_j(t)$ be the adjustable power (control) at node $j$ and at time $t$. In addition, let $B_{jk}$ be the susceptance of power line $(j,k)$, and $f_{jk}(t)$ be the amount of power flow from node $j$ to node $k$ at time $t$. We can describe the dynamics of the power grid using the swing equation at every node and the power flow equation at every line as follows. 

\vspace{-3mm}
\begin{subequations}
\begin{alignat}{3}
	 & M_j \dot{\omega_j}(t) = -D_j \omega_j(t) + p_j(t) + u_j(t) &&-\sum_{k: (j,k)\in \mathcal{E_P}}f_{jk}(t)  \nonumber \\
	 & \quad\quad && j \in \mathcal{N_P} \\
	 & \dot{f_{jk}}(t) = B_{jk}(\omega_j(t)-\omega_k(t)) \quad&&  (j,k)\in \mathcal{E_P}
\end{alignat}
\label{grid_dynamics}
\end{subequations}
\vspace{-3mm}

The objective of our control is to minimize the total cost of adjustable power at steady-state while balancing power. Let $p_j^*$ be the steady-state unadjustable power, and $u_j^*$ be the steady-state adjustable power (control) at node $j$. Moreover, let $f^*_{jk}$ be the steady-state power flow from node $j$ to node $k$. The optimal steady-state control can be formulated as follows.

\vspace{-3mm}
\begin{subequations}
\begin{alignat}{4}
	& \min_{u^*,f^*} \quad && \sum_{j\in \mathcal{N_P}} \frac{1}{2}C_j u_j^{*2}  \\
	& s.t. && p_j^*+u_j^*-\sum_{k: (j,k)\in \mathcal{E_P}} f^*_{jk}=0 \quad j \in \mathcal{N_P}
\end{alignat}
\label{cost_opt_eq}
\end{subequations}
\vspace{-3mm}

It was shown in \cite{zhao2015distributed} and \cite{dorfler2014breaking} that the optimal solution to equation (\ref{cost_opt_eq}) has the form of $C_i u_i^*=C_j u_j^*$, where $\sum_{j \in \mathcal{N}} u_j^*=-\sum_{j \in \mathcal{N}} p^*_j$ \footnote{Note that satisfying condition $C_i u_i^*=C_j u_j^*$ can also be interpreted as fairness in sharing the loads.}.

\section{Distributed Control}\label{Distributed_Sec}

Let the power grid be supported by a connected communication network $\mathcal{G_C=\{N_C,E_C\}}$, where $\mathcal{N_C}$ denotes the set of communication nodes, and $\mathcal{E_C}$ denotes the set of communication links. The optimal distributed frequency control can be described by the following differential equation.

\vspace{-3mm}
\begin{equation}
	C_i \dot{u_i}(t) = -\omega_i(t) - C_i \sum_{j: (i,j) \in \mathcal{E_C}}(C_i u_i(t) - C_j u_j(t)) \quad i \in \mathcal{N_P}  \label{Dist_cont_eq}
\end{equation}
\vspace{-3mm}

Accordingly, the distributed control works as follows: node $i$ measures the local frequency $\omega_i$, receives the information $C_j u_j(t)$ from the neighbor nodes via the communication network, and updates the local control value $u_i(t)$. It is shown in \cite{zhao2015distributed} and \cite{dorfler2014breaking} that if the communication network is connected, the control mechanism in equation (\ref{Dist_cont_eq}) converges to the optimal solution, which is globally asymptotically stable.

\subsection{Impact of Communication Link Failures}
The control mechanism in equation (\ref{Dist_cont_eq}) will achieve the optimal solution if the communication network is connected. However, if the communication network is disconnected, while power will be balanced, optimal cost may not be achieved; i.e. it cannot guarantee that $C_iu_i^*=C_ju_j^*$ for all $i,j$ nodes. Next, we show via an example that the impact on the cost could be significant. 

Consider the power grid in Figure \ref{Toy_Example} (The data of the grid and the costs can be found in Appendix \ref{ToyPower_Data}). In this example, the communication network has the same topology as the power grid. The total load in this grid is 25 p.u., and we increase the load in node 3 by 5 p.u. ($20\%$ total increase). Simulation results show that the optimal cost, by applying control mechanism \ref{Dist_cont_eq} under a fully connected communication network, is 23.27. If the communication link between nodes $2$ and $7$ fails, the cost increases to $35.69$, while the cost under no communication is $39.11$. This example shows that the failue of only one communication link could have a significant impact on the cost of distributed control.

\begin{figure}[ht]
\centering
\includegraphics[scale=0.4]{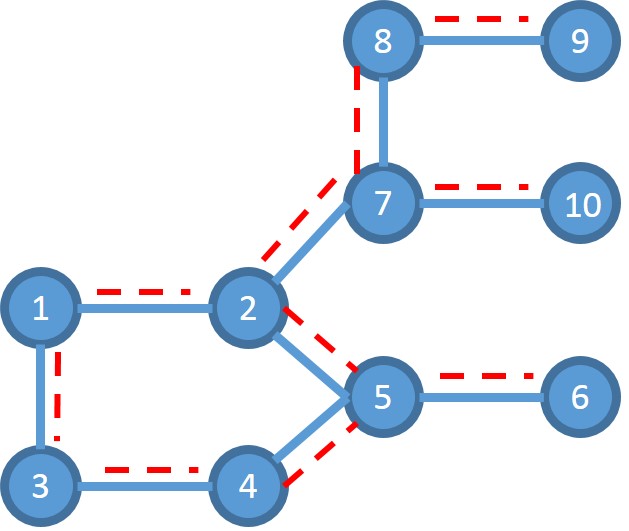}
\caption{Power Grid Toy Example - Solid lines are power lines and dashed lines are communication lines.}
\label{Toy_Example}
\end{figure}

\subsection{Impact of Discrete-time Communication}\label{Impact_Disc_Comm}
In the design and analysis of the distributed control mechanism described in equation (\ref{Dist_cont_eq}), it is assumed that the communication messages are updated in continuous time. However, in reality, the communication messages will be updated in discrete time. Let $T$ be the time interval between two communication messages. Then, the distributed control can be described as follows.

\begin{align}
	 & C_i \dot{u_i}(t) = -\omega_i(t) - C_i \sum_{j: (i,j) \in \mathcal{E_C}}(C_i u_i(t) - C_j u_j(KT)) \nonumber \\
	 & \quad\quad\quad\quad\quad\quad\quad\quad i \in \mathcal{N_P},  \quad KT \leq t \leq (K+1)T  \label{Discrete_Dist_cont_eq}
\end{align}

Define the convergence time $t^*$ to be the first time such that $|(Cost(t^*)-Cost^*)|<0.01$, where $Cost(t^*)$ is the cost at time $t^*$ and $Cost^*$ is the optimal cost. By running the control in equation (\ref{Discrete_Dist_cont_eq}) on the power grid in Figure \ref{Toy_Example} for different values of $T$, it can be seen that the time of convergence increases as $T$ increases (See Figure \ref{ConvergenceTime}).

\begin{figure}[ht]
\centering
\includegraphics[width=0.6\linewidth]{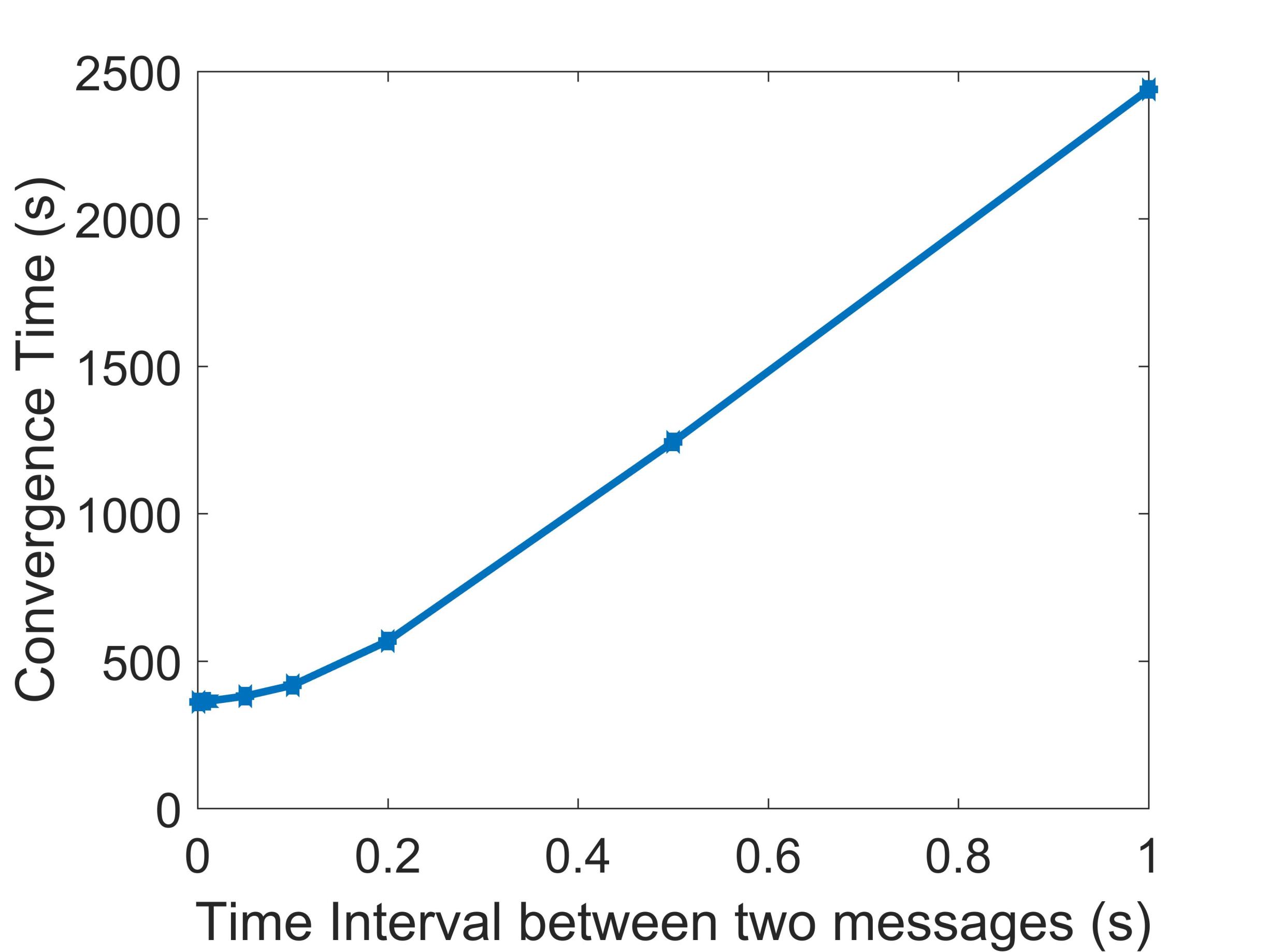}
\caption{Convergence Time increases as $T$ increases.}
\label{ConvergenceTime}
\end{figure}

\section{Decentralized Control for Two-node System}\label{TwoBus_Sec}

In this section, we consider a two-node system connected by a power line and communication link as in Figure \ref{TwoBus_Comm}. As described, when the communication link fails, node $i$ does not receive information $C_ju_j(t)$ and node $j$ does not receive information $C_i u_i(t)$. Therefore, the optimal cost cannot be achieved. Next, we propose a control algorithm that uses the dynamics of the power grid instead of direct information $C_i u_i(t)$ and $C_j u_j(t)$, and still achieves the optimal solution. 

Previously, the adjustable power at both nodes $i$ and $j$ was updated based on the local frequency and the information received from the neighboring node. In our control scheme, we update the adjustable power at every node based on the local frequency and a local artificial variable, where this variable is updated based on the power flow dynamics between the two nodes. Since the changes in the flow is a function of the frequency at both nodes, it contains some indirect information about the adjustable control as well as the cost of the neighbor node. We prove that this information is enough to guarantee the optimality of the our control scheme.

Let $q_{i}$ and $q_{j}$ be the two artificial variables at nodes $i$ and $j$, respectively. Our decentralized control for the two-node system can be described as follows.

\begin{subequations}
	\begin{alignat}{1}
 		& C_i \dot{u_i}(t) = -\omega_i(t) - q_{i}(t)   \label{update_Control1}\\
 		& C_j \dot{u_j}(t) = -\omega_j - q_{j}(t)   \label{update_Control2}\\
 		& \dot{q_{i}}(t)= - \frac{\dot{f_{ij}}}{B_{ij}} - 2q_{i}(t)   \label{update_Control3}\\ 		
 		& \dot{q_{j}}(t)=   \frac{\dot{f_{ij}}}{B_{ij}} - 2q_{j}(t)    	\label{update_Control4}	
 	\end{alignat}
 	\label{TwoBusControl}
\end{subequations}

As described, control at node $i$ is updated only based on the local frequency $\omega_i$ and the value of artificial variable $q_i$. Moreover, value of $q_i$ is updated based on the derivative of flow $f_{ij}$ which can be observed locally. Similarly, control at node $j$ depends on the local frequency $\omega_j$ and the derivative of flow $f_{ji}$ which can be observed locally. Thus, there is no need to a communication network between nodes $i$ and $j$. Next, we claim that the new control achieves the optimal solution (See Figure \ref{Two_Bus_Fig}).

\begin{figure}[ht]
\centering
\subfigure[Two-Node System with Communication]
{\label{TwoBus_Comm}\includegraphics[scale=0.45]{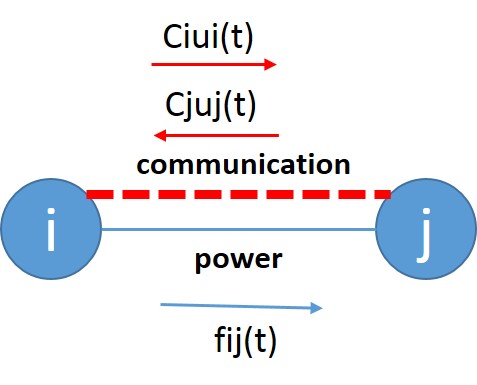}}\hfill
\subfigure[Two-Node System without Communication]
{\label{TwoBus_NoComm}\includegraphics[scale=0.45]{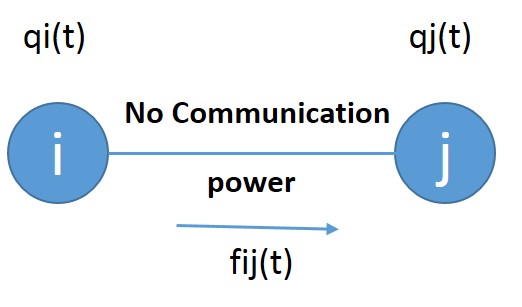}}
\caption{Let $t_0$ be the time failure: node $i$ knows $c_ju_j(t_0)$ and node $j$ knows $c_iu_i(t_0)$; Nodes $i$ and $j$ can initialize $q_{i}(t_0)$ and $q_{j}(t_0)$ properly to guarantee optimality}
\label{Two_Bus_Fig}
\end{figure}

Using the new control as in equations (\ref{TwoBusControl}), the dynamics of the system can be written as follows.

\begin{subequations}
	\begin{alignat}{1}
  	& M_i \dot{\omega_i}(t) = -D_i \omega_i(t) + p_i(t) + u_i(t) -f_{ij}(t)  \label{eq1} \\
		& M_j \dot{\omega_j}(t) = -D_j \omega_j(t) + p_j(t) + u_j(t) +f_{ij}(t)  \label{eq2}\\
	 	& \dot{f_{ij}}(t) = B_{ij}(\omega_i(t)-\omega_j(t)) \label{eq3}\\
 		& C_i \dot{u_i}(t) = -\omega_i(t) - q_{i}(t)   \label{eq4}\\
 		& C_j \dot{u_j}(t) = -\omega_j(t) - q_{j}(t)  \label{eq5}\\
 		& \dot{q_{i}}(t)= -(\omega_i(t) - \omega_j(t)) - 2q_{i}(t)   \label{eq6}\\ 		
 		& \dot{q_{j}}(t)= -(\omega_j(t) - \omega_i(t)) - 2q_{j}(t)    \label{eq7}
 	\end{alignat}
 	\label{TwoBusControl_rewrite}
\end{subequations}

In the following, we will prove the optimality and stability of the dynamical system described in equation (\ref{TwoBusControl_rewrite}).

\subsection{Optimality}

\begin{theorem}\label{Optimality_TwoBus}
Let $q_{i}(t_0) = -q_{j}(t_0) = C_i u_i(t_0) - C_j u_j(t_0)$. Then, the equilibrium point of the system described in equation (\ref{TwoBusControl_rewrite}) achieves the optimal cost.
\end{theorem}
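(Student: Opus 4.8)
The plan is to characterize the equilibrium of (\ref{TwoBusControl_rewrite}) directly and then argue that the prescribed initialization forces that equilibrium to satisfy the optimality condition $C_i u_i^* = C_j u_j^*$ from Section \ref{Model_Sec}, rather than merely balancing power.

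First I would set every derivative in (\ref{TwoBusControl_rewrite}) to zero. From (\ref{eq3}), $\dot{f_{ij}}=0$ forces $\omega_i^*=\omega_j^*$; substituting this into (\ref{eq6})--(\ref{eq7}) gives $q_i^*=q_j^*=0$, and then (\ref{eq4})--(\ref{eq5}) yield $\omega_i^*=\omega_j^*=0$, so the nominal frequency is restored. Adding the steady-state versions of (\ref{eq1}) and (\ref{eq2}) produces the power-balance condition $u_i^*+u_j^*=-(p_i^*+p_j^*)$. At this stage the equilibrium is \emph{not} unique: the two remaining equations (\ref{eq1})--(\ref{eq2}) involve three unknowns $u_i^*,u_j^*,f_{ij}^*$, so the resting points form a one-parameter family. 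The equilibrium conditions alone therefore cannot guarantee optimality, and the burden must fall on the initial condition.

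The crux is to exhibit a conserved quantity. Differentiating $C_i u_i - C_j u_j$ via (\ref{eq4})--(\ref{eq5}) gives $\frac{d}{dt}(C_i u_i - C_j u_j) = -(\omega_i-\omega_j)-(q_i-q_j)$, while (\ref{eq6})--(\ref{eq7}) give $\frac{d}{dt}(q_i-q_j) = -2(\omega_i-\omega_j)-2(q_i-q_j)$. Comparing the two, $\frac{d}{dt}(q_i-q_j) = 2\,\frac{d}{dt}(C_i u_i - C_j u_j)$, so the quantity $I(t) := (q_i(t)-q_j(t)) - 2\bigl(C_i u_i(t)-C_j u_j(t)\bigr)$ is invariant along every trajectory of (\ref{TwoBusControl_rewrite}).

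Finally I would evaluate $I$ at $t_0$. The hypothesis $q_i(t_0)=-q_j(t_0)=C_i u_i(t_0)-C_j u_j(t_0)$ gives $q_i(t_0)-q_j(t_0)=2\bigl(C_i u_i(t_0)-C_j u_j(t_0)\bigr)$, hence $I(t_0)=0$ and therefore $I(t)\equiv 0$. Passing to the equilibrium, where $q_i^*=q_j^*=0$, the identity $I=0$ collapses to $C_i u_i^* = C_j u_j^*$. Together with the power-balance condition derived above, this is precisely the optimality characterization stated after (\ref{cost_opt_eq}), so the equilibrium achieves the optimal cost. The main obstacle is the discovery of the invariant $I$: the equilibrium analysis by itself leaves a continuum of balanced-but-suboptimal resting points, and only the conservation law---activated by the prescribed initialization---selects the cost-optimal one. (That the trajectory actually converges to this equilibrium is a separate claim handled by the stability result.)
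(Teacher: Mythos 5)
Your proof is correct and follows essentially the same route as the paper's: an equilibrium analysis yielding $\omega_i^*=\omega_j^*=0$ and $q_i^*=q_j^*=0$, followed by a conservation law along trajectories that the prescribed initialization zeroes out, which at equilibrium collapses to $C_iu_i^*=C_ju_j^*$. The only difference is packaging: you use the single invariant $I=(q_i-q_j)-2(C_iu_i-C_ju_j)$, whereas the paper first establishes $q_i(t)\equiv -q_j(t)$ from the initialization and then integrates $\dot q_i = \frac{d}{dt}(C_iu_i-C_ju_j)$ from $t_0$ to infinity --- your version is marginally cleaner, since it never needs the antisymmetry of $q$ and uses only the weaker difference condition $q_i(t_0)-q_j(t_0)=2\bigl(C_iu_i(t_0)-C_ju_j(t_0)\bigr)$.
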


\begin{proof}

In order to prove the optimality, we need to show that equation (\ref{TwoBusControl_rewrite}) guarantees $\omega^*_i=\omega^*_j=0$ and $C_i u^*_i=C_j u^*_j$ at the equilibrium point; i.e. power is balanced, and cost is minimized.

At the equilibrium, all of the derivatives in equation (\ref{TwoBusControl_rewrite}) are equal to zero. Therefore, we will have the following equations.

\vspace{-2mm}
\begin{subequations}
	\begin{alignat}{1}
		& \omega^*_i - \omega^*_j = 0\\
		& \omega^*_i - q^*_{i} =0\\
		& \omega^*_j - q^*_{j} =0\\
		& -(\omega^*_i - \omega^*_j) - 2q^*_{i} =0 \\
		& -(\omega^*_j - \omega^*_i) - 2q^*_{j} =0
 	\end{alignat}
	\label{equilibrium}
\end{subequations}
\vspace{-2mm}

Solving equations (\ref{equilibrium}) results in $\omega^*_i=\omega^*_j=0$, which guarantees that power is balances at the equilibrium point. In addition, we will have $q^*_{i}=q^*_{j}=0$.

Equations (\ref{eq6}) and (\ref{eq7}) show that $\dot{q_{i}}(t)=-\dot{q_{j}}(t)$ for all time $t\geq t_0$. Since we have initialized $q_{i}(t_0)= -q_{j}(t_0)$, it is easy to see that $q_{i}(t)=-q_{j}(t)$ for $t\geq t_0$. 

Next, we subtract equation (\ref{eq5}) from equation (\ref{eq4}). Thus, we will have $C_i \dot{u_i} - C_j \dot{u_j} = -(\omega_i-\omega_j) - 2q_{i}$ which is equal to the equation (\ref{eq6}). Therefore, $\dot{q_{i}} = C_i \dot{u_i} - C_j \dot{u_j}$. 

By taking integral over both sides from $t=t_0$ to infinity, we will have $(C_i u^*_i - C_i u_i(t_0))-(C_j u^*_j - C_j u_j(t_0)) = q^*_{i}-q_{i}(t_0)$, and assumption $q_{i}(t_0) = C_i u_i(t_0) - C_j u_j(t_0)$ results in $c_i u^*_i - C_j u^*_j = q^*_{i}$. Since $q^*_{i}=0$, $C_i u^*_i = C_j u^*_j$ which guarantees the optimality of the equilibrium point.

\end{proof}

\subsection{Stability}

Next, we prove that the equilibrium point of the dynamical system described in equation (\ref{TwoBusControl_rewrite}) is globally asymptotically stable. Since our dynamical system is linear, it is enough to show that the roots of the characteristic polynomial of the system are all located in the negative side of the plane.

Let $D, M, C \in R^{2 \times 2}$ be diagonal matrices denoting the droop coefficient, inertia and cost at nodes $i$ and $j$, respectively. Let $B$ be the susceptance of the power line between nodes $i$ and $j$. Moreover, let $A_p \in R^{2\times 1}$ be the node-edge incidence matrix, $L_p^B = A_p B A_p^T$ be the weighted laplacian matrix of the power grid, and $L_c \in R^{2 \times 2}$ be the laplacian matrix of the communication network. Finally, let $s(\lambda)$ be the characteristic polynomial of the system.

By applying the schur complement formula as well as elementary row operations, $s(\lambda)$ can be simplified to the following (See Appendix \ref{Two-Node-Simplify_Subsec} for more details).

$s(\lambda)= (\lambda+2) det(M^{-1}) det(H(\lambda))$ where $H(\lambda) =$
\begin{align}
	& \begin{bmatrix}	(\lambda^2 D + \lambda^3 M + \lambda C^{-1}) + (\lambda L_cD+\lambda^2 L_c M + (2+\lambda)L_p^B)	\end{bmatrix} \nonumber
\end{align}

Since the system is linear, it is enough to show that the real parts of all roots of characteristic polynomial $s(\lambda)$ are negative.

\begin{theorem}
The conditions in equation (\ref{sufficient}) are sufficient to guarantee that the equilibrium point of the system described in equation \ref{TwoBusControl_rewrite} is globally asymptotically stable.

\vspace{-5mm}
\begin{subequations}
	\begin{alignat}{1}
	&	M \succ 0  \label{cond1}\\
	&	\frac{1}{2}(L_c M+M L_c) + D \succ 0 \label{cond2} \\
	&	\frac{1}{2}(L_cD + DL_c)+L_P^B + C^{-1} \succ 0	 \label{cond3}\\
	&\lambda_{min}[(L_p^B + \frac{1}{2}(L_cD + DL_c) + C^{-1})]\times \nonumber \\
	&\lambda_{min}[(	\frac{1}{2}(L_c M+M L_c) + D)] > 4B\max\{M_1,M_2\} \label{cond4}
 	\end{alignat}
	\label{sufficient}
\end{subequations}
\vspace{-2mm}

\end{theorem}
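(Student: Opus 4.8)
The plan is to exploit that the system is linear and reduce everything to a root-location statement about the given factorization $s(\lambda) = (\lambda+2)\det(M^{-1})\det H(\lambda)$. The factor $(\lambda+2)$ contributes only the root $\lambda = -2$, and $\det(M^{-1}) \neq 0$ by (\ref{cond1}), so it suffices to show that every root of $\det H(\lambda) = 0$ lies in the open left half-plane, with the one unavoidable exception $\lambda = 0$ (forced because $L_p^B = A_p B A_p^T$ is a singular Laplacian, so $H(0) = 2L_p^B$ is singular). That zero root is benign: it corresponds to the conserved quantity $C_i u_i - C_j u_j - q_i$ identified in the proof of Theorem \ref{Optimality_TwoBus}, so asymptotic stability is understood on each level set of that invariant. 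For the remaining roots I would argue by contradiction: if $\det H(\lambda) = 0$ with $\mathrm{Re}\,\lambda \ge 0$ and $\lambda \neq 0$, pick a unit null vector $v \in \mathbb{C}^2$ with $H(\lambda)v = 0$ and left-multiply by $v^*$ to collapse the $2\times2$ matrix polynomial to the scalar cubic $m\lambda^3 + a_2\lambda^2 + a_1\lambda + a_0 = 0$, whose coefficients are quadratic forms in $v$.

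Next I would read off the coefficients and connect them to (\ref{sufficient}). Since $M,D,C,L_c,L_p^B$ are symmetric, $(L_c M)^T = M L_c$ and $(L_c D)^T = D L_c$, so the real parts of the complex coefficients are exactly the symmetrized matrices in the hypotheses: $m = v^* M v > 0$ by (\ref{cond1}); $\mathrm{Re}(a_2) = v^*[\tfrac12(L_c M + M L_c) + D]v \ge \lambda_{\min}[\cdots] > 0$ by (\ref{cond2}); $\mathrm{Re}(a_1) = v^*[\tfrac12(L_c D + D L_c) + L_p^B + C^{-1}]v \ge \lambda_{\min}[\cdots] > 0$ by (\ref{cond3}); and $a_0 = 2\,v^* L_p^B v \ge 0$ is real. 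Using $v^* L_p^B v \le 2B$ and $v^* M v \le \max\{M_1,M_2\}$ for unit $v$ gives $m a_0 \le 4B\max\{M_1,M_2\}$, so (\ref{cond4}) yields precisely $\mathrm{Re}(a_2)\,\mathrm{Re}(a_1) > m a_0$.

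These four facts are exactly the Routh--Hurwitz conditions for a cubic with positive leading coefficient. In the symmetric case where the coefficients are real (e.g. $M_1 = M_2$, $D_1 = D_2$, or $v$ real, so the skew parts vanish) this immediately rules out a root with $\mathrm{Re}\,\lambda \ge 0$. For the general case the coefficients are genuinely complex, because $L_c M$ and $L_c D$ are non-symmetric and contribute purely imaginary terms $\mathrm{Im}(a_2) = (M_1 - M_2)\,\mathrm{Im}(\bar v_1 v_2)$ and $\mathrm{Im}(a_1) = (D_1 - D_2)\,\mathrm{Im}(\bar v_1 v_2)$. I would handle this by an imaginary-axis test: substitute $\lambda = i\omega$, split into real and imaginary equations, and show that (\ref{cond4}) together with the bounds on $a_0$ and $m$ makes them incompatible for any real $\omega \neq 0$; combined with a continuity-of-roots (homotopy) argument that scales the off-diagonal coupling down to a manifestly Hurwitz configuration, no root can cross into $\mathrm{Re}\,\lambda \ge 0$.

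The main obstacle is exactly these non-symmetric cross terms $L_c M$ and $L_c D$: they turn the reduced cubic into a complex-coefficient polynomial, so the elementary real Routh--Hurwitz test does not apply verbatim, and one must show the skew (imaginary) contributions cannot open a right-half-plane root. This is precisely why the hypotheses are stated with the symmetrized products $\tfrac12(L_c M + M L_c)$ and $\tfrac12(L_c D + D L_c)$ rather than naive matrix products, and why (\ref{cond4}) is phrased as a strict product inequality --- it is the ``$a_1 a_2 > a_0 a_3$'' Hurwitz margin that keeps the roots strictly off the imaginary axis. Verifying that this margin survives the imaginary parts is the one genuinely delicate computation; everything else is bookkeeping on quadratic forms.
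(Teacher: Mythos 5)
Your proposal follows essentially the same route as the paper's proof in Appendix \ref{Two-Node-Stability_Subsec}: factor out $(\lambda+2)\det(M^{-1})$, reduce $\det H(\lambda)=0$ to the scalar cubic $v^*H(\lambda)v=0$ via a null vector $v$, identify the coefficients $a_3=v^*Mv$, $a_2$, $a_1$, $a_0=2v^*L_p^Bv$ with the symmetrized matrices appearing in (\ref{cond1})--(\ref{cond3}), bound $a_0a_3\le\lambda_{\max}(2L_p^B)\,\lambda_{\max}(M)=4B\max\{M_1,M_2\}$ on unit vectors, and invoke the Routh--Hurwitz margin $a_1a_2>a_0a_3$ supplied by (\ref{cond4}). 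Your treatment of the root $\lambda=0$ (as the direction associated with the conserved quantity $C_iu_i-C_ju_j-q_i$ from Theorem \ref{Optimality_TwoBus}) is in fact more explicit than the paper's, which notes the zero root but never reconciles it with the claim of global asymptotic stability of what is a non-isolated equilibrium.

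The one step you defer --- the complex-coefficient issue --- is, however, a genuine hole in your write-up, and you should know that the paper's own proof does not close it either: the paper simply writes $x^*(L_cD)x=x^*\tfrac12(L_cD+DL_c)x$ and $x^*(L_cM)x=x^*\tfrac12(L_cM+ML_c)x$ as equalities and then applies the real-coefficient Routh--Hurwitz criterion. For complex $x$ these equalities hold only for the real parts: since $\overline{x^*Ax}=x^*A^Tx$ for real $A$, one has $\mathrm{Re}(x^*Ax)=x^*\tfrac12(A+A^T)x$, while the skew parts of $L_cD$ and $L_cM$ contribute exactly the purely imaginary terms $(D_1-D_2)\,\mathrm{Im}(\bar x_1x_2)$ and $(M_1-M_2)\,\mathrm{Im}(\bar x_1x_2)$ that you computed. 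These vanish only if $D_1=D_2$, $M_1=M_2$, or the null vector can be taken real --- none of which is guaranteed. So the reduced cubic is genuinely complex-coefficient, the real Routh--Hurwitz test does not apply verbatim, and a complete proof requires either the complex-polynomial (Bilharz-type) version of the criterion or your proposed imaginary-axis-plus-homotopy argument carried out in full; your proposal announces that computation but does not execute it. In short: your attempt reproduces the paper's argument step for step and correctly isolates its weakest link, but as written neither your proposal nor the paper actually disposes of that link.
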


\begin{proof}
See Appendix \ref{Two-Node-Stability_Subsec}.
\end{proof}

Next, we argue that sufficient conditions in equations (\ref{cond1})-(\ref{cond4}) often hold in practice. Condition (\ref{cond1}) holds as inertia is a positive value. Condition (\ref{cond2}) holds as the inertia of nodes in a distribution network is very small; and the matrix becomes strictly diagonally dominant. Conditions (\ref{cond3}) and (\ref{cond4}) hold if the cost values are scaled down; i.e. increase $C^{-1}$. Note that the only requirement for optimality of the control is that the ratio of power distribution be proportional to the inverse ratio of costs. Thus, scaling all the cost values will not affect the solution.

\section{Control under Communication Link Failures}\label{CommFailures_Sec}

In this section, we extend the idea in Section \ref{TwoBus_Sec} to multi-node systems. In particular, we introduce a new control mechanism that uses the dynamics of the power flow between adjacent nodes to replicate the direct information transmitted between them via a communication link. We show that our new control mechanism achieves the optimal solution under single communication link failure, and improves the cost under multiple communication link failures.

\subsection{Single Communication Link Failure}

Consider the power grid and communication network in Figure \ref{FailedComm_Single}. Suppose the communication link between nodes $i$ and $j$ fails. We claim that if nodes $i$ and $j$ update their local control decision only based on the power flow between nodes $i$ and $j$, and the rest of the nodes keep their previous control rule, the dynamical system will converge to the optimal solution. The new control mechanism can be described as follows.

\begin{subequations}
	\begin{alignat}{1}
		& C_k \dot{u_k}(t) = -\omega_k(t) - C_k \sum_{l: (k,l) \in \mathcal{E_C}}(C_k u_k(t) - C_l u_l(t)) \nonumber \\
		&\quad\quad\quad\quad\quad\quad\quad\quad\quad\quad\quad\quad\quad\quad\quad\quad k \in \mathcal{N}\backslash \{i,j\} \label{Original_control_singlefail}\\
 		& C_i \dot{u_i}(t) = -\omega_i(t) - q_{i}(t)     \label{updated_control_failure1}\\
 		& C_j \dot{u_j}(t) = -\omega_j(t) - q_{j}(t)   \\
 		& \dot{q_{i}}(t)= -(\omega_i(t) - \omega_j(t)) - 2q_{i}(t)   \\ 		
 		& \dot{q_{j}}(t)= -(\omega_j(t) - \omega_i(t)) - 2q_{j}(t)    \label{updated_control_failure4}
 	\end{alignat}
 	\label{FailedComm_Control}
\end{subequations}

\begin{figure}[ht]
\centering
\includegraphics[scale=0.4]{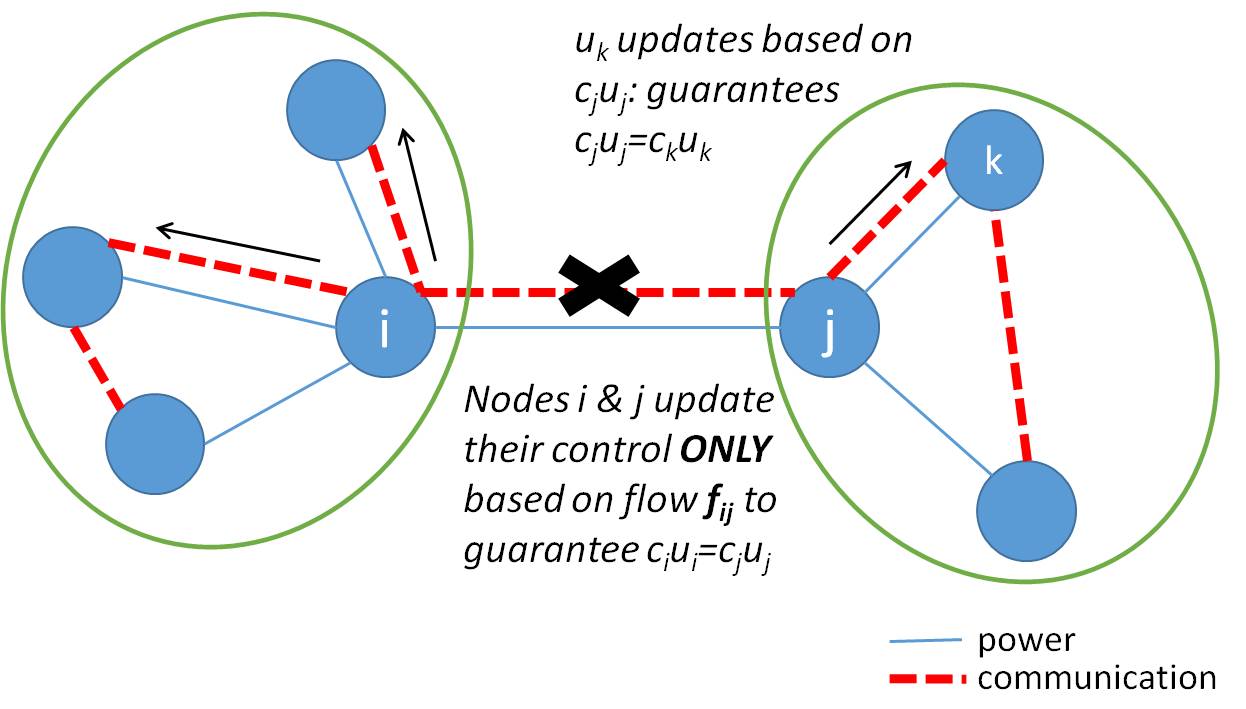}
\caption{Power Grid and Communication Network - Solid lines are power lines and dashed lines are communication lines.}
\label{FailedComm_Single}
\end{figure}

According to equation (\ref{FailedComm_Control}), all the nodes that are connected to node $i$ via the communication network, receive the information $c_i u_i(t)$ from node $i$; however, node $i$ does not update its control based on the information received from other nodes via communication network. Similarly, all the nodes connected to node $j$ via the communication network, update their control based on the information $C_ju_j$ they receive from node $j$; however, node $j$ does not use the information it receives from other nodes via the communication network. Instead, nodes $i$ and $j$ update their control only based on their local frequency and the power flow between nodes $i$ and $j$. This control rule can be interpreted as a master/slave algorithm, where nodes $i$ and $j$ are the master nodes that guarantee $C_i u_i^*=C_j u_j^*$, and the rest of nodes are the slave nodes that follow the changes in nodes $i$ and $j$. 

\begin{theorem}
Suppose the communication link between nodes $i$ and $j$ fails at time $t_0$, but they are connected via a power line. By updating the control mechanism according to equation (\ref{FailedComm_Control}), and initializing $q_{i}(t_0)= -q_{j}(t_0)=C_i u_i(t_0) - C_j u_j(t_0)$, the optimal solution will be achieved.
\end{theorem}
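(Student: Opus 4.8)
The plan is to follow the two-stage template of Theorem~\ref{Optimality_TwoBus}: first characterize the equilibrium and show that power is balanced, then establish the weighted consensus condition $C_k u_k^* = C_l u_l^*$ for every pair of nodes. The genuinely new ingredient, relative to the two-node case, is propagating the equality enforced at the master pair $\{i,j\}$ outward through the remaining (possibly disconnected) communication graph to the slave nodes $k \in \mathcal{N} \setminus \{i,j\}$.

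\textbf{Equilibrium and power balance.} First I would set every derivative in the combined swing/flow/control dynamics to zero. The flow equations $\dot f_{kl} = B_{kl}(\omega_k - \omega_l) = 0$, together with connectivity of the power grid, force a common steady-state frequency $\omega_k^* \equiv \omega^*$ at all nodes. The artificial-variable equations for the master nodes, being identical to (\ref{eq6})--(\ref{eq7}), then give $-(\omega_i^* - \omega_j^*) - 2q_i^* = 0$, whence $q_i^* = q_j^* = 0$; substituting into $C_i \dot u_i = -\omega_i - q_i = 0$ (equation (\ref{eq4})) yields $\omega_i^* = \omega_j^* = 0$, so $\omega_k^* = 0$ everywhere and power is balanced.

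\textbf{Equality at the master pair.} Since the master updates (\ref{updated_control_failure1})--(\ref{updated_control_failure4}) coincide verbatim with the two-node control (\ref{eq4})--(\ref{eq7}), the conserved-quantity argument of Theorem~\ref{Optimality_TwoBus} carries over unchanged: $\dot q_i = -\dot q_j$ and the initialization give $q_i(t) = -q_j(t)$ for all $t \geq t_0$, from which $\dot q_i = C_i \dot u_i - C_j \dot u_j$; integrating from $t_0$ to the equilibrium and using $q_i(t_0) = C_i u_i(t_0) - C_j u_j(t_0)$ produces $C_i u_i^* - C_j u_j^* = q_i^* = 0$. Thus $C_i u_i^* = C_j u_j^*$, exactly the consensus value that the failed link would otherwise have supplied.

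\textbf{Propagation to the slave nodes (the main obstacle).} Writing $x_k := C_k u_k^*$, the equilibrium condition for each slave node $k$ reduces (using $\omega_k^* = 0$ and $C_k > 0$) to $\sum_{l:(k,l)\in\mathcal{E_C}}(x_k - x_l) = 0$, i.e. $x$ is discretely harmonic on the residual communication graph at every node other than $i$ and $j$. The crux is to combine this with $x_i = x_j$ to force $x$ to be constant, and this is precisely where the single-link-failure hypothesis is essential: removing one edge from a connected graph leaves every connected component of the residual graph containing at least one of $i,j$, so no ``boundary-free'' component (on which $x$ could be an arbitrary floating constant) can appear. I would then split into two cases. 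If the failed link was not a bridge, the residual graph stays connected, $x$ is harmonic away from the boundary $\{i,j\}$, and by the discrete maximum principle the harmonic extension with equal boundary values $x_i = x_j$ is the constant function. If it was a bridge, the graph splits into components $\mathcal{C}_i \ni i$ and $\mathcal{C}_j \ni j$; on each, $x$ is harmonic except at its single master node, so $x \equiv x_i$ on $\mathcal{C}_i$ and $x \equiv x_j$ on $\mathcal{C}_j$, and $x_i = x_j$ glues these into a single global constant. In either case $C_k u_k^* = C_l u_l^*$ for all $k,l$, which with power balance gives optimality. I expect verifying the absence of boundary-free components under single-link failure, and the clean bridge/non-bridge split, to be the main difficulty; convergence to this equilibrium would then be secured by a linear stability analysis analogous to that of Section~\ref{TwoBus_Sec}.
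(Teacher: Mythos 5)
Your proposal is correct, and its skeleton---equilibrium power balance, the conserved-quantity argument of Theorem~\ref{Optimality_TwoBus} applied to the master pair $\{i,j\}$, and propagation of consensus to the slave nodes---is the same master/slave decomposition the paper uses. The substantive difference is in the third step. The paper's proof simply asserts that equation~(\ref{Original_control_singlefail}) ``guarantees'' $C_k u_k^* = C_l u_l^*$ for every slave node $k$ and its neighbors $l$; but at equilibrium that equation only yields the aggregate condition $\sum_{l:(k,l)\in\mathcal{E_C}}(C_k u_k^* - C_l u_l^*)=0$ at each slave node, not link-wise equality, so the paper's one-line claim is precisely the statement that a function harmonic off $\{i,j\}$ with $x_i=x_j$ must be constant. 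Your discrete maximum-principle argument, with the bridge/non-bridge dichotomy guaranteeing that every component of the residual communication graph contains a master node, is exactly what closes this gap, and it makes explicit where the single-failure hypothesis enters (no ``floating'' boundary-free component can exist)---which also clarifies why the multiple-failure scheme~(\ref{MultipleFailedComm_Control}) loses its optimality guarantee. Two minor remarks: in the bridge case you can bypass the maximum principle by summing the harmonic conditions over one component, which forces harmonicity at its master node too, hence $L x = 0$ on that component and $x$ is constant there; and your integration ``from $t_0$ to the equilibrium,'' like the paper's, presupposes convergence of the trajectory, which both you and the paper correctly defer to the linear stability analysis (Appendix~\ref{MultiNode_StabilityProof}).
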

\begin{proof}
Equation (\ref{Original_control_singlefail}) guarantees that $C_k u_k^* = C_l u_l^*$ for all $ k \in \mathcal{N}\backslash \{i,j\}$. In particular, for any node $k$ connected to node $i$, $C_ku_k^*=C_iu_i^*$, and for any node $k$ connected to node $j$, $C_ku_k^*=C_ju_j^*$. On the hand, equations (\ref{updated_control_failure1})-(\ref{updated_control_failure4}) guarantee that $C_i u_i^* = C_j u_j^*$ (See Theorem \ref{Optimality_TwoBus} for optimality of a two-node system). Therefore, the equilibrium point is optimal.
\end{proof}

\begin{corollary}
Suppose the power grid has a connected topology, and the original communication network contains a subtree of the power grid. Then, the control mechanism described in equation (\ref{FailedComm_Control}) achieves the optimal solution, under any single communication link failure.
\end{corollary}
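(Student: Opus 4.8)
The plan is to reduce the statement to the preceding single-link-failure theorem together with a connectivity argument on an ``effective consensus graph.'' First I would fix the failed communication link, say $(i,j)$, and observe that the hypothesis is exactly what guarantees applicability of that theorem: because the communication network sits on a subtree of the power grid, the two endpoints $i$ and $j$ of any communication link are also joined by a power line. Hence the artificial variables $q_i,q_j$ in equations (\ref{updated_control_failure1})--(\ref{updated_control_failure4}) can be driven by the locally observable flow $f_{ij}$, and---under the initialization $q_i(t_0)=-q_j(t_0)=C_iu_i(t_0)-C_ju_j(t_0)$---Theorem \ref{Optimality_TwoBus} yields $C_iu_i^*=C_ju_j^*$ at the equilibrium.

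Next I would establish global power balance. At equilibrium every flow derivative in the power-grid dynamics (\ref{grid_dynamics}) vanishes, so $\omega_k^*=\omega_l^*$ across every power line; since the power grid is connected this forces a single common frequency $\omega^*$. Summing the control equilibrium equations over all nodes makes the consensus/Laplacian terms telescope to zero, leaving $\sum_k \omega_k^*=0$ and hence $\omega^*=0$. This is precisely the balance condition $\sum_k u_k^*=-\sum_k p_k^*$ required by problem (\ref{cost_opt_eq}).

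The core step is propagating the pairwise equalities into a single global one. For every node $k\notin\{i,j\}$ the slave rule (\ref{Original_control_singlefail}) is the ordinary consensus controller, so at equilibrium $C_ku_k^*=C_lu_l^*$ holds for each surviving communication neighbour $l$; for the pair $(i,j)$ the two-node argument supplies $C_iu_i^*=C_ju_j^*$. I would then introduce the effective consensus graph whose edges are the surviving communication links together with the single virtual edge $(i,j)$ created by the power-flow controller. By construction this edge set coincides with that of the original (connected) communication network, so the effective graph is connected. Chaining the pairwise equalities along a spanning tree of this graph gives $C_ku_k^*=C_lu_l^*$ for all nodes, and together with the balance condition this is exactly the characterization $C_ku_k^*=\text{const}$, $\sum_k u_k^*=-\sum_k p_k^*$ of the unique optimum from Section \ref{Model_Sec}.

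The main obstacle is this consensus-propagation step rather than either of the first two. One has to make precise that reinstating the failed link as a virtual power-flow edge leaves the consensus graph connected even when $(i,j)$ was a bridge of the communication network; this is where the hypothesis does the real work, by simultaneously guaranteeing (i) that the failed link's endpoints share a power line, so the virtual edge can always be formed, and (ii) that the communication network is connected, so the surviving links plus that virtual edge still span every node. A secondary point I would be careful about is the word ``achieves'': the argument above only certifies that the equilibrium is optimal, so to claim convergence I would appeal to the global asymptotic stability established for the two-node block in Section \ref{TwoBus_Sec} together with connectedness of the slave subnetwork, noting that the full multi-node stability analysis parallels the spectral argument already carried out there.
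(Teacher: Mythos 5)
Your proof has a genuine gap, and it stems from a misreading of the hypothesis. ``The communication network contains a subtree of the power grid'' means the communication edge set \emph{includes} a set of links forming a spanning subtree of the power grid; it does not say that the communication network is a subgraph of the power grid. Consequently your opening claim --- that ``the two endpoints $i$ and $j$ of any communication link are also joined by a power line'' --- is false. Counterexample: let the power grid be the path $1\!-\!2\!-\!3$ and let the communication network have links $(1,2),(2,3),(1,3)$; it contains the spanning subtree $\{(1,2),(2,3)\}$ of the power grid, yet the communication link $(1,3)$ has no parallel power line. If that link fails, the virtual power-flow edge you rely on cannot be formed, and your entire argument, which reduces everything to the reinstated edge $(i,j)$, does not apply.

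The paper's proof is a two-case analysis precisely because of this. Case 1: the failed link $(i,j)$ has no parallel power line; then $(i,j)$ cannot be an edge of the power-grid subtree contained in the communication network, so the surviving communication network still contains that spanning subtree and remains connected; hence no control update is needed and the original consensus control (\ref{Dist_cont_eq}) already converges to the optimum. Case 2: a power line between $i$ and $j$ exists; then the updated control (\ref{FailedComm_Control}) applies and the preceding single-failure theorem gives optimality --- this is the only case your proposal covers, and your ``effective consensus graph'' argument is essentially a re-derivation of that theorem's proof, which the corollary only needs to cite. So your treatment of Case 2 is acceptable (including the attention you pay to convergence versus mere equilibrium optimality, which the paper glosses over), but the role you assign to the subtree hypothesis is wrong: its actual job is to guarantee connectivity of the surviving communication network in Case 1, not to guarantee that every communication link sits on a power line.
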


\begin{proof}
Let an arbitrary communication link $(i,j)$ fail. If there does not exist a power line between nodes $i$ and $j$, the communication topology is guaranteed to remain connected as it still contains a subtree of the power grid. Thus, the control mechanism will not be updated, and the optimal solution will be achieved. If there exists a power line between nodes $i$ and $j$, the control mechanism will be updated as in equation (\ref{FailedComm_Control}), which guarantees to achieve the optimal solution.
\end{proof}

Similar to the two-node system, one can find sufficient conditions under which the updated control mechanism in equation (\ref{FailedComm_Control}) is globally asymptotically stable for a multi-node system. For more details See Appendix \ref{MultiNode_StabilityProof}.

Consider Figure \ref{Toy_Example}, and suppose that the communication link between nodes $2$ and $7$ fail. Under the original control mechanism, the cost increases from $23.27$ to $35.69$. However, the new control mechanism will achieve the optimal solution. 

We compare the frequency response of the original control under full communication and the new control under single communication link failure. For simplicity, we only show the angular velocities at nodes $2$ and $7$ in Figures \ref{Original_Freq} and \ref{Updated_Freq_Single}; however, the same results hold for all the other nodes. We observed that for all nodes, the frequency response of the two control mechanisms are very similar, indicating that the new control mechanism will not create any abrupt changes in the frequency of the system.

\begin{figure}[ht]
\centering
\subfigure[Frequency Response of Original Control under Full Communication]
{\label{Original_Freq}\includegraphics[width=0.45\linewidth]{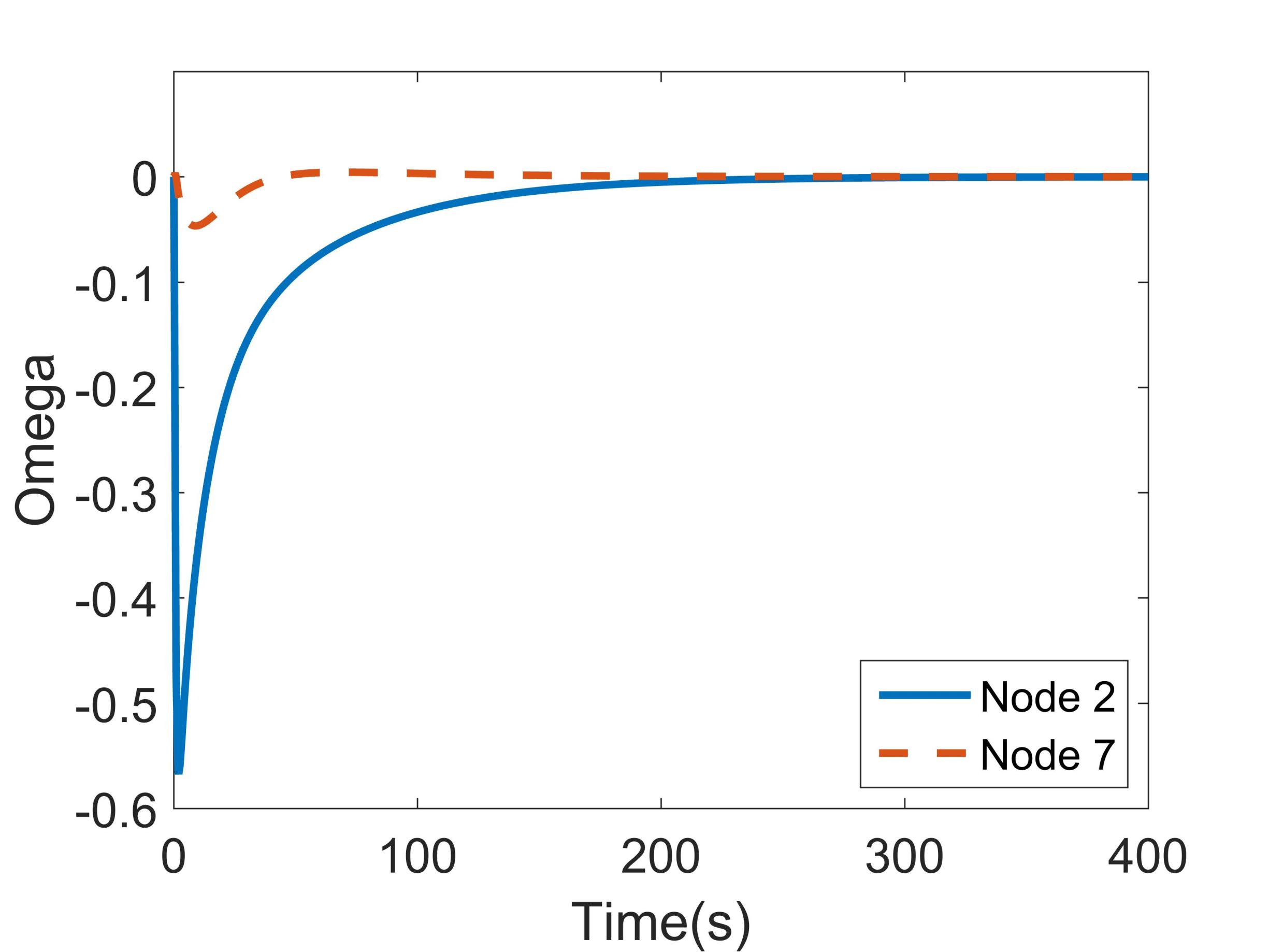}}                
\subfigure[Frequency Response of New Control under Single Link Communication Failure]
{\label{Updated_Freq_Single}\includegraphics[width=0.45\linewidth]{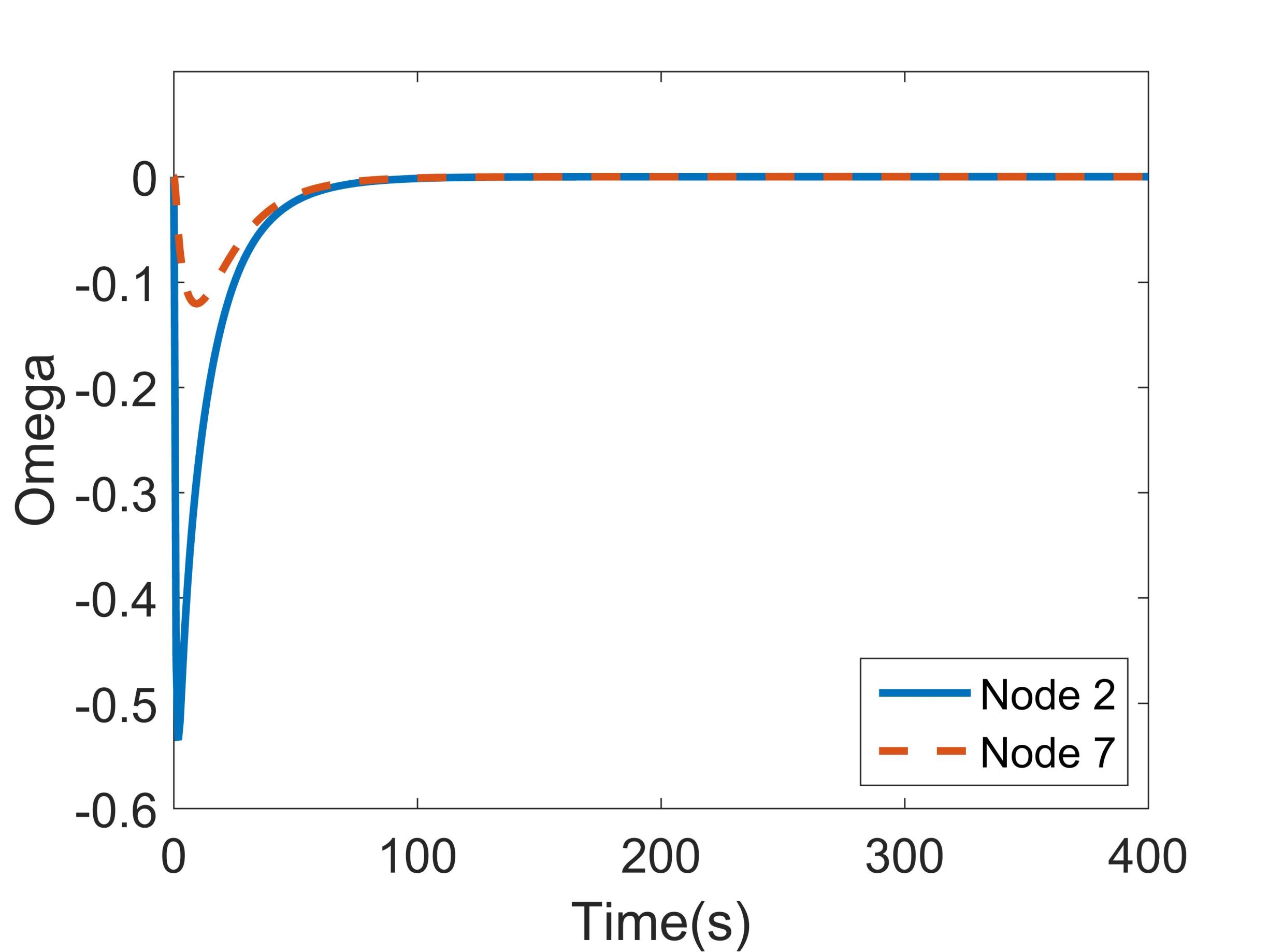}}
\caption{Comparing the frequency responses}
\end{figure}
\vspace{-5mm}

\subsection{Multiple Communication Link Failures}

In this section, we consider the case that multiple communication links fail (See Figure \ref{FailedComm_Multiple} as an example.) We generalize the control mechanism described for the single communication link failures as follows. 

\begin{figure}[ht]
\centering
\includegraphics[scale=0.5]{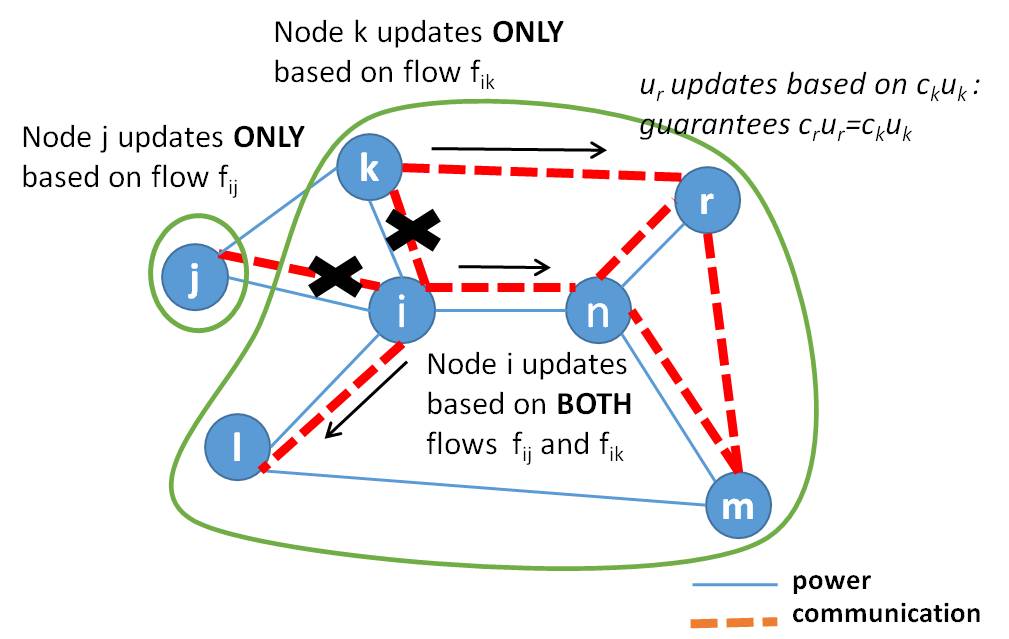}
\caption{Power Grid and Communication Network - Solid lines are power lines and dashed lines are communication lines.}
\label{FailedComm_Multiple}\vspace{-0.4cm}
\end{figure}

Consider pairs of nodes that have lost their communication links, but they are connected via power lines. Let $F$ be the set of such nodes. Moreover, let $q_i$ be the artificial variable for every node $i \in F$, and initialize it as $q_{i}(t_0)=\sum_{j \in F: (i,j)\in E_P} (C_i u_i(t_0) - C_j u_j(t_0))$. 

The update control rule can be written as follows.

\begin{subequations}
	\begin{alignat}{1}
		& C_r \dot{u_r}(t) = -\omega_r(t) - C_r \sum_{l: (r,l) \in \mathcal{E_C}}(C_r u_r(t) - C_l u_l(t)) \quad r \in \mathcal{N}\backslash F \\
 		& C_i \dot{u_i}(t) = -\omega_i(t) - q_{i}(t)   \quad i \in F\\
 		& \dot{q_{i}}(t) = -\sum_{j \in F: (i,j)\in E_P}(\omega_i(t) - \omega_j(t)) - 2q_{i}(t)   		
 	\end{alignat}
 	\label{MultipleFailedComm_Control}
\end{subequations}

It can be seen from equation (\ref{MultipleFailedComm_Control}) that every pair of node $i$ and $j$ that have lost their communication link, but are connected via a power line will switch to the new control rule, where the control rule at the rest of nodes remains the same. This control rule does not guarantee to achieve the optimal solution; however, we show that in practice it improves the cost.

Consider the power grid in Figure \ref{Toy_Example}, and assume that the communication links between nodes $1$ and $2$ and nodes $2$ and $5$ have failed. Under the original control, the cost increases from $23.27$ to $36.87$ which is $58\%$ increase in the optimal cost. However, our control described in equation (\ref{MultipleFailedComm_Control}) achieves a cost of $25.45$, which is only $9\%$ increase in the optimal cost ($49\%$ improvement). In addition, we observed that the new control policy will not lead to any unacceptable changes in the frequency response.

\section{Control with Discrete-Time Communication}\label{DiscreteComm_Sec}

In this Section, we study the impact of discrete-time communication on the performance of distributed frequency control. As discussed in Section \ref{Impact_Disc_Comm}, when the time interval between communication messages increases, the convergence time increases. In this Section, we propose an algorithm that sequentially updates the control of pairs of nodes using the dynamics of the power flow between them. Using simulation results, we show that the new algorithm converges much faster than the original one.

Let $T$ be the time interval between communication messages. Let $\mathcal{E_S}=\{ e_1,\cdots, e_m \} = \mathcal{E_P \cap E_C }$ be the set of pairs of nodes that share the power lines and communication links. The algorithm is as follows.

Let communication messages update at time instants $KT$, where $K \geq 0$. At each time interval $KT \leq t <(K+1)T$, the algorithm selects a link $e_r \in E_S$, and updates the control according to equations (\ref{FailedComm_Control}), where $i$ and $j$ are the end-nodes of the selected link $e_r$. The only difference is in equation (\ref{Original_control_singlefail}), where the control should be updated based on the most recent communication message received at time $KT$; i.e. $C_k \dot{u_k}(t) = -\omega_k(t) - C_k \sum_{l: (k,l) \in \mathcal{E_C}}(C_k u_k(t) - C_l u_l(KT)) \quad \forall k \neq i,j$. At the beginning of next time interval, new communication messages will be received, and the algorithm selects the next link in $E_S$. The algorithm keeps iterating on the links in sequence until convergence is achieved.

Figure \ref{Sequential_Update} shows the sequence of link selection and control updates at nodes. This algorithm improves the convergence rate because during each interval, it uses the additional information from the dynamics of the power grid to update the control at each node. 

\vspace{-2mm}
\begin{figure}[ht]
\centering
\includegraphics[scale=0.44]{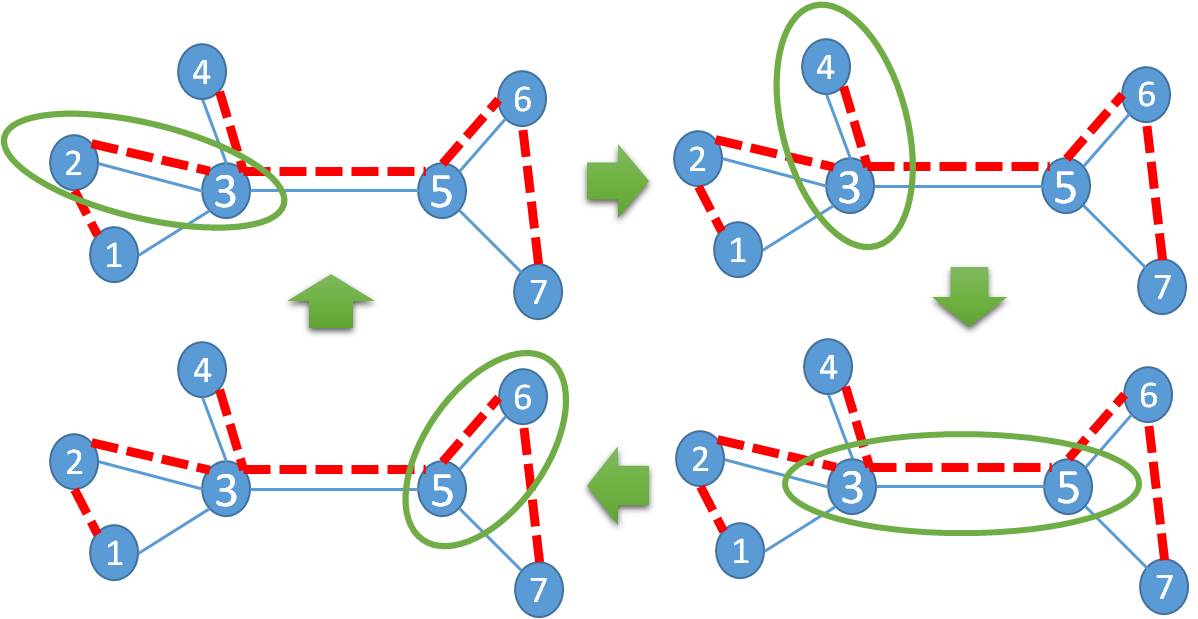}
\caption{Power Grid and Communication Network - Solid lines are power lines and dashed lines are communication lines. The shared edges between the power grid and communication network are $(2,3),(3,4),(3,5),(5,6)$, and the algorithm sequentially selects one of these edges, and uses its power flow to control the power changes at nodes.}
\label{Sequential_Update}\vspace{-3mm}
\end{figure}

We applied the original control scheme as well as the new control scheme to the power grid in Figure \ref{Toy_Example}. For simplicity, we only show the results for two nodes $1$ and $5$; however, the results are the same for the rest of nodes. Figures \ref{T=1ms} and \ref{T=1s} show that increasing the value of $T$ increases the convergence time under the original control. Figures \ref{T=1ms_OriginalControl} and \ref{T=1s_NewAlgorithm} indicate that by applying the new control mechanism, the convergence time for $T=1s$ is similar to the convergence time of the original control for $T=1ms$. In addition, it can be seen that although the general behavior of the power under both control mechanisms are similar, there are some fluctuations in the value of power under the new control algorithm. However, by comparing the frequency response of the control mechanisms in Figures \ref{Original_Freq_T=1ms} and \ref{Updated_Freq_T=1s}, it can be seen that the fluctuations in the frequency response of nodes under the new algorithm are negligible.

\vspace{-2mm}
\begin{figure}
\centering
\subfigure[T=1ms; Original Control]
{\label{T=1ms}\includegraphics[width=0.45\linewidth]{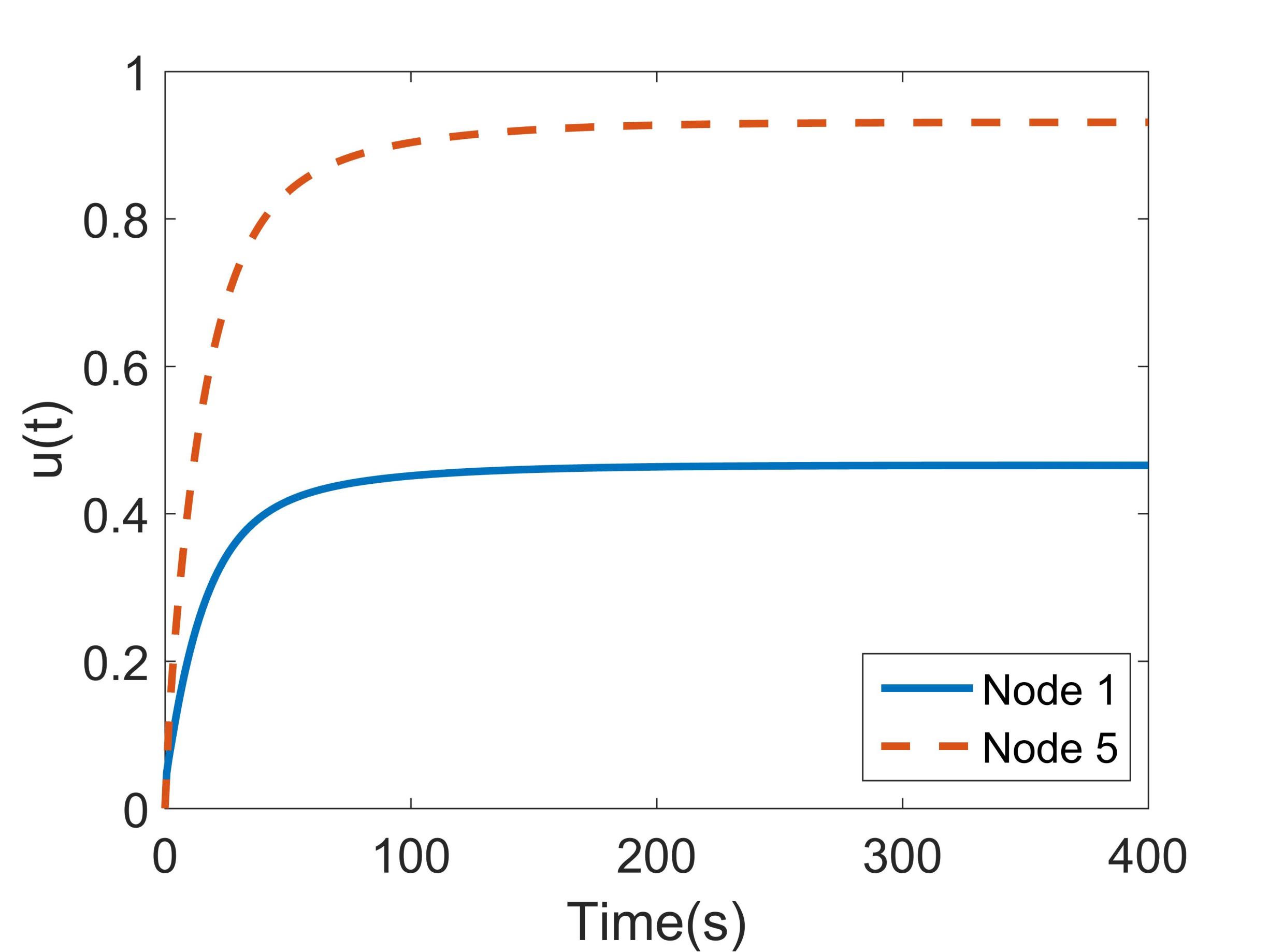}}                
\subfigure[T=1s; Original Control]
{\label{T=1s}\includegraphics[width=0.45\linewidth]{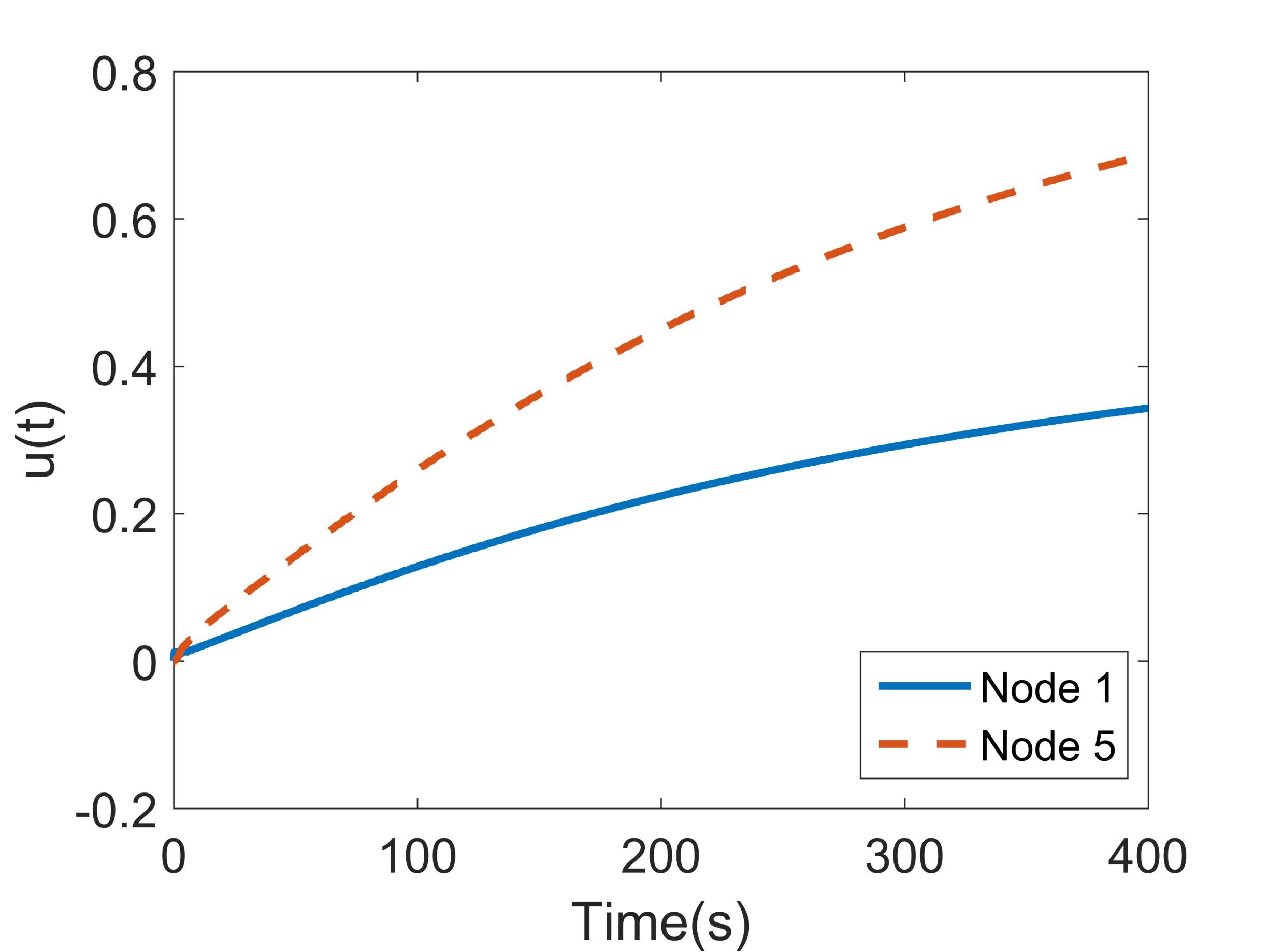}}
\subfigure[T=1ms; Original Control]
{\label{T=1ms_OriginalControl}\includegraphics[width=0.45\linewidth]{T=1ms}}                
\subfigure[T=1s; New Control Algorithm]
{\label{T=1s_NewAlgorithm}\includegraphics[width=0.45\linewidth]{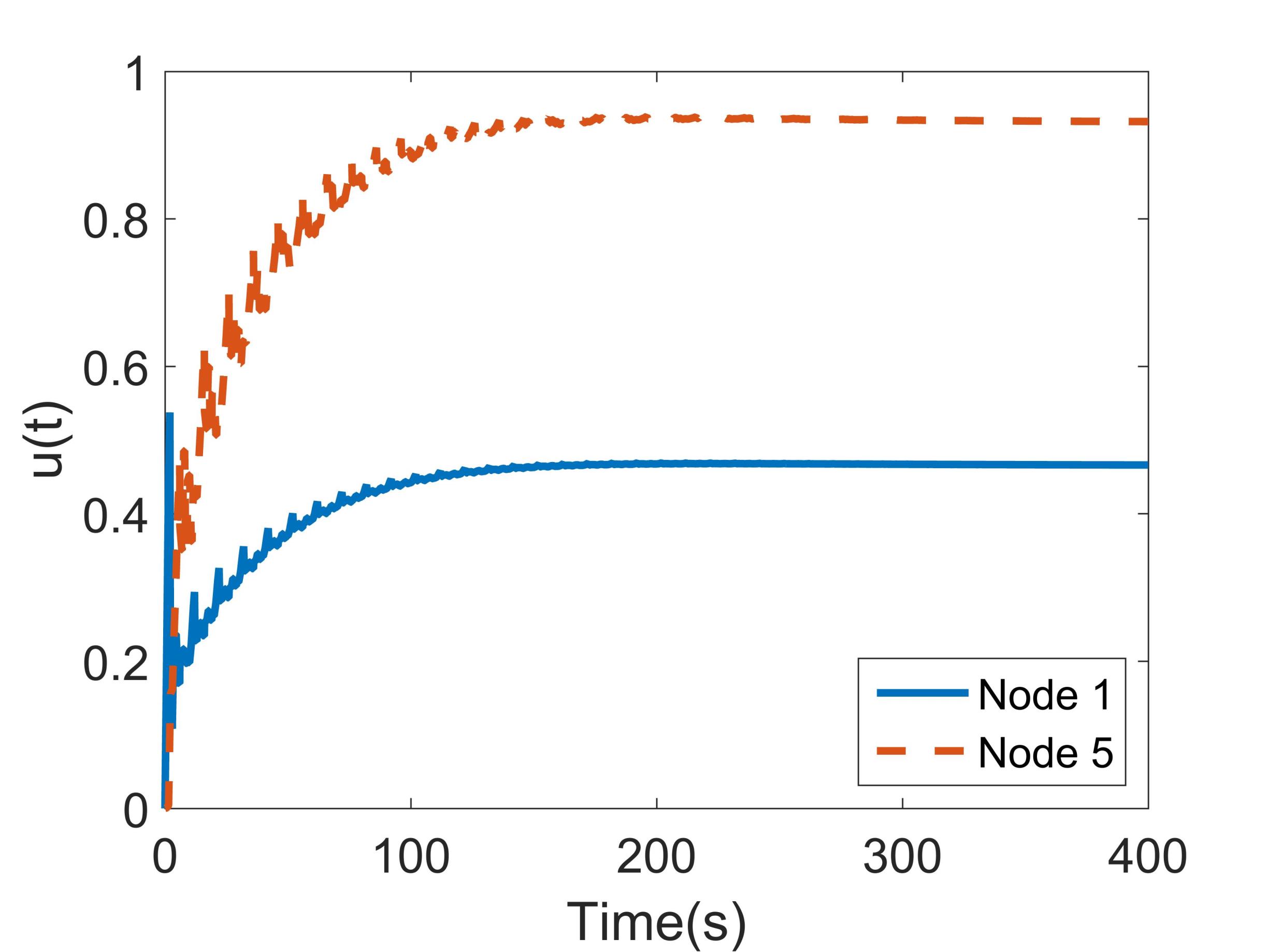}}
\subfigure[Frequency Response of Original Control for T=1ms]
{\label{Original_Freq_T=1ms}\includegraphics[width=0.45\linewidth]{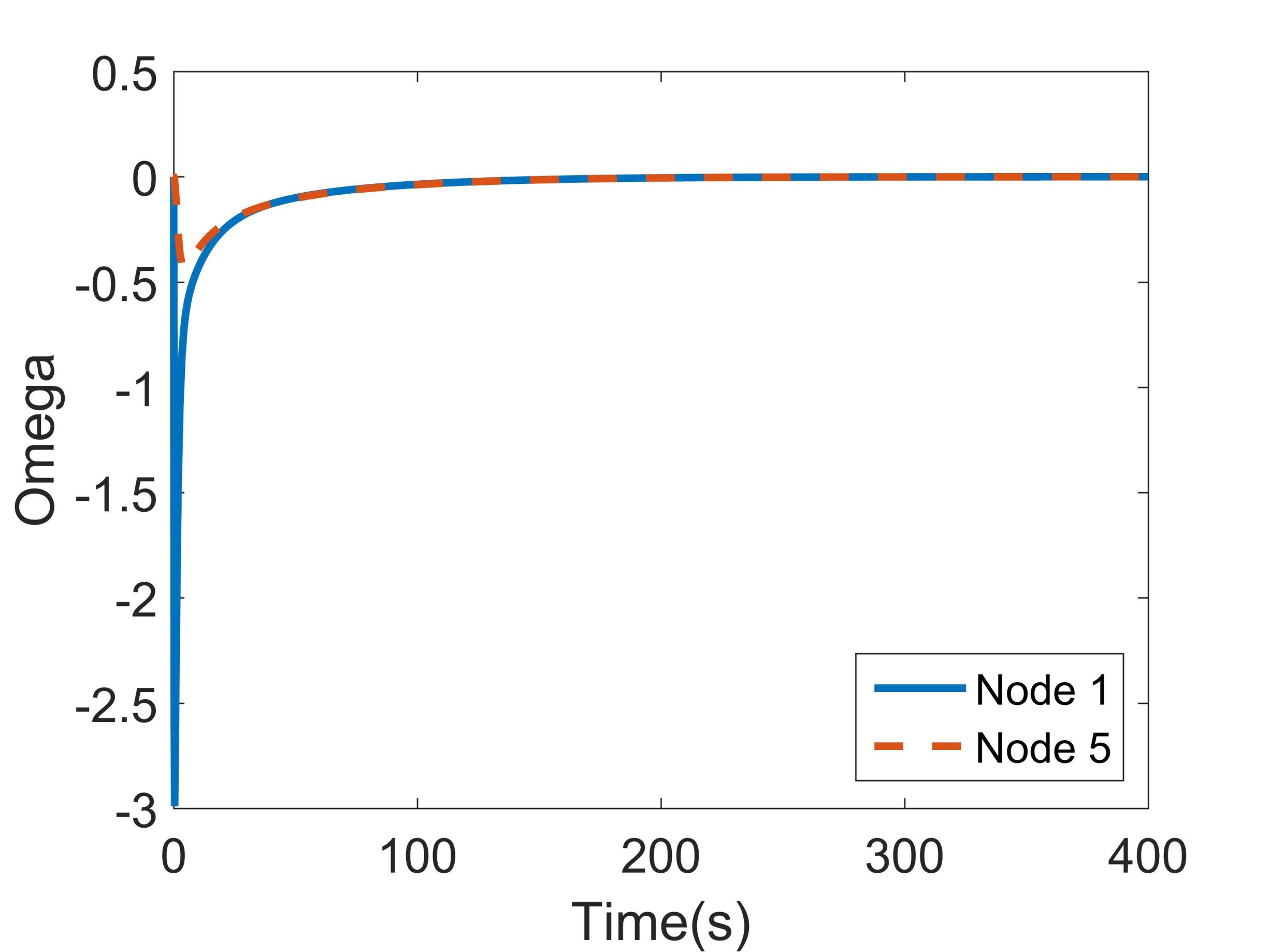}}                
\subfigure[Frequency Response of New Control for T=1s]
{\label{Updated_Freq_T=1s}\includegraphics[width=0.45\linewidth]{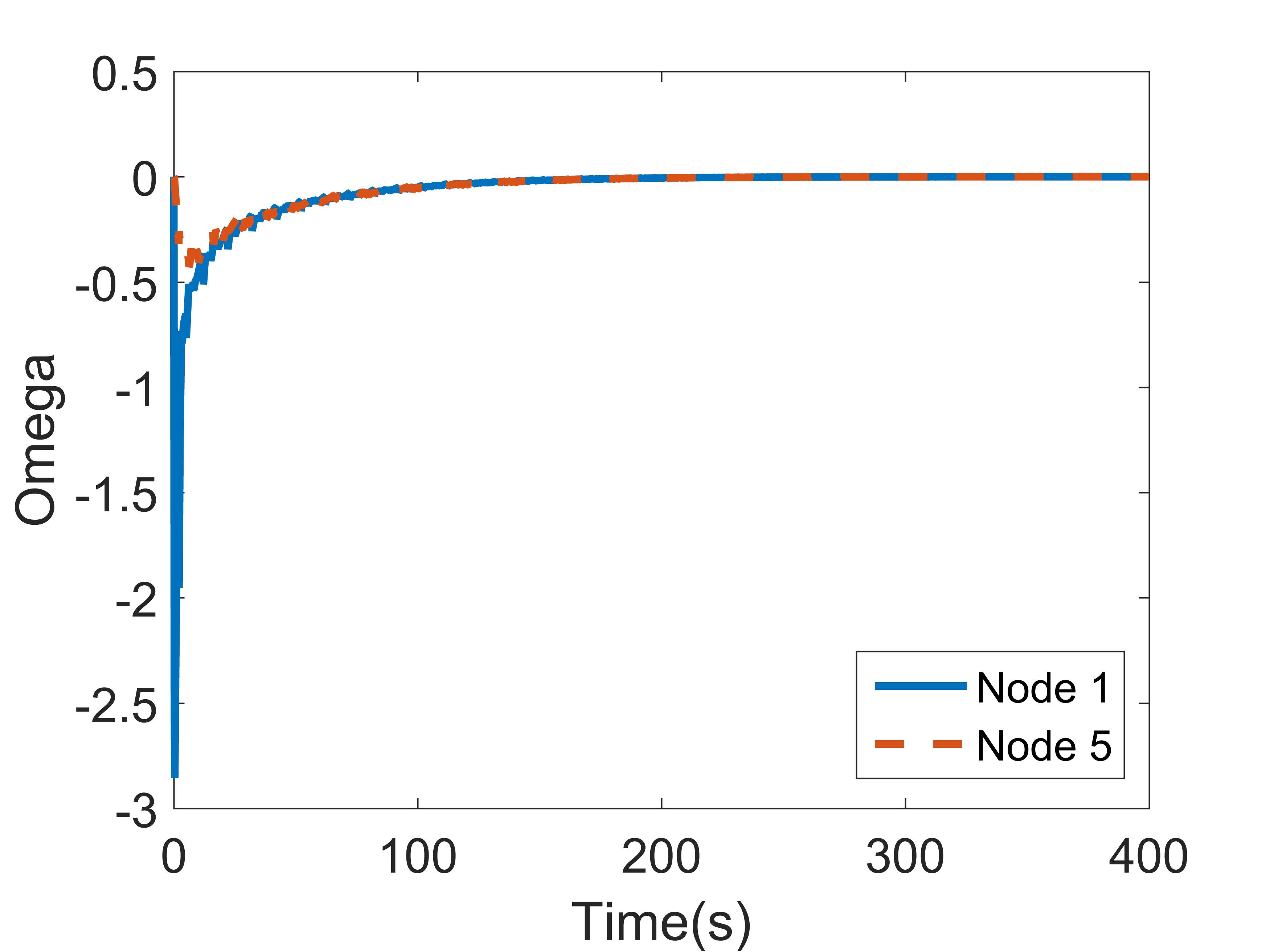}}
\caption{Comparing the power and frequency response for large T under new control with small T under the original control \vspace{-5mm}}
\end{figure}

\section{Conclusion}\label{Conclusion_Sec}
In this paper, we analyzed the impact of communication failures as well as discrete-time communication messages on the performance of optimal distributed frequency control. We considered the consensus-based algorithm proposed in \cite{zhao2015distributed} and \cite{dorfler2014breaking}, and showed that although the control mechanism can balance the power, it will not achieve the optimal solution under communication failures. 

Next, we proposed a novel control mechanism that uses the dynamics of the power flow between two nodes instead of the information received directly from the communication link between them. We proved that our algorithm achieves the optimal solution under any single communication link failure. We also used simulation results to show that the new control improves the cost under multiple communication link failures.

Finally, we showed that the convergence time of the distributed control increases as the time between two communication messages increases. We proposed a sequential control scheme which uses the dynamics of the power grid, and using simulation results, we showed that it improves the convergence time significantly.

\bibliographystyle{IEEEtran}
\bibliography{reference}

\appendices

\section{Data of the power grid in Figure \ref{Toy_Example}}\label{ToyPower_Data}

Inertia, initial power, droop control and cost of adjustable control at nodes 1 to 10 are as follows.

$M =[ 0.01 , 0.02 , 0.01 , 0.1 , 0.05 , 0.8 , 0.05 , 1 , 0.1 , 0.01]$

$P_0=[1 ,5 ,-2 ,6 ,-5 ,-10 ,-4 ,8 ,5 ,-4]$

$D = \frac{|P_0|}{3} $

$\quad \sim [0.33 ,1.67 ,0.67 ,2 ,1.67 ,3.33 ,1.33 ,2.67 ,1.67 ,1.33]$

$Cost = [10 ,10 ,100 ,100 ,5 ,10 ,7 ,9 ,5 ,10]$

Reactance of lines 1 to 10 are as follows.

$Reactance = [1 , 2 , 3 , 1 , 5 , 4 , 6 , 1 , 9 , 1]$

\section{Stability of Two-Node System}\label{Two-Node-Stability-Proof}
First, we show the process of simplifying the characteristic polynomial; and then, discuss the conditions under which the characteristic polynomial has negative roots.

\subsection{Simplifying $s(\lambda)$ for Two node system}\label{Two-Node-Simplify_Subsec}

Let the state vector of the two-node system be $[\omega_i, \omega_j, f_{ij}, u_i, u_j, q_{i}, q_{j}]$. We can rewrite the state matrix of our dynamical system as follows.

\vspace{5mm}
$A=
\begin{bmatrix}
-M^{-1}D  &-M^{-1}A_p 		& M^{-1}I^{2 \times 2}	& 0^{2 \times 2}\\
BA_p^T    & 0^{1 \times 1}	& 0^{1 \times 2}		& 0^{1 \times 2}\\
-C^{-1}   & 0^{2 \times 1} 	& 0^{2 \times 2}		& -C^{-1}\\
-L_c	  & 0^{2 \times 1}	& 0^{2 \times 2}		& -2I^{2 \times 2}
\end{bmatrix}
$\vspace{5mm}

Let $s(\lambda)=det(A-\lambda I)$ be the characteristic polynomial of matrix $A$. 

$
s(\lambda) \\
= \small{det
\begin{bmatrix}
-M^{-1}D-\lambda I^{2 \times 2} &-M^{-1}A_p 		& M^{-1}I^{2 \times 2}		& 0^{2 \times 2}\\
BA_p^T    						& -\lambda			& 0^{1 \times 2}			& 0^{1 \times 2}\\
-C^{-1}   						& 0^{2 \times 1} 	& -\lambda I^{2 \times 2}	& -C^{-1}\\
-L_c	  						& 0^{2 \times 1}	& 0^{2 \times 2}			& -(2+\lambda)I^{2 \times 2}
\end{bmatrix}}
$\vspace{5mm}

By schur complement formula, \\
$
s(\lambda)= (2+\lambda)^2 
det
\begin{bmatrix}
-M^{-1}D-\lambda I^{2 \times 2} 										&-M^{-1}A_p 			& M^{-1}\\
BA_p^T     																					& -\lambda				& 0^{1 \times 2}\\
C^{-1}[\frac{L_c}{2+\lambda}-I^{2 \times 2}]  			& 0^{2 \times 1}	& -\lambda I^{2 \times 2}

\end{bmatrix}
$\vspace{5mm}

Next, we take $(2+\lambda)^2$ into the matrix by multiplying the last row with $(2+\lambda)I^{2 \times 2}$. 

Thus,
$
s(\lambda) \\
= det
\begin{bmatrix}
-M^{-1}D-\lambda I^{2 \times 2} &-M^{-1}A_p 		& M^{-1}\\
BA_p^T     						& -\lambda			& 0^{1 \times 2}\\
C^{-1}[L_c-(\lambda+2)I]  		& 0^{2 \times 1}	& -\lambda(\lambda+2) I^{2 \times 2}

\end{bmatrix}
$\vspace{5mm}

Next, we take out $[L_c-(\lambda+2)I]$ from the big matrix as follows.\\

$
s(\lambda)= det(-[L_c-(\lambda+2)I]) \\
\small{det
\begin{bmatrix}
-M^{-1}D-\lambda I^{2 \times 2} &-M^{-1}A_p 		& M^{-1}\\
BA_p^T     						& -\lambda			& 0^{1 \times 2}\\
-C^{-1}  		& 0^{2 \times 1}	& \lambda(\lambda+2) [L_c-(\lambda+2)I]^{-1}

\end{bmatrix}}
$\vspace{5mm}

By simplifying the matrices, we will get the following.
$
s(\lambda)= \lambda(\lambda+2)det
\begin{bmatrix}
-M^{-1}D-\lambda I  &-M^{-1}A_p 		& M^{-1}\\
BA_p^T     			& -\lambda			& 0^{1 \times 2}\\
-C^{-1}  			& 0^{2 \times 1}	& -L_c-\lambda I
\end{bmatrix}
$\vspace{5mm}

By schur complement formula, \\
$
s(\lambda)= \lambda(\lambda+2) det(L_c + \lambda) \\
det
\begin{bmatrix}
-M^{-1}D-\lambda I - M^{-1}(L_c+\lambda I)^{-1}C^{-1}	&-M^{-1}A_p 	\\
BA_p^T     											& -\lambda		
\end{bmatrix}
$\vspace{5mm}

We apply the schur complement formula, one more time.\\
$
s(\lambda)= \lambda^2(\lambda+2) det(L_c + \lambda) \\
det
\begin{bmatrix}
-M^{-1}D-\lambda I - M^{-1}(L_c+\lambda I)^{-1}C^{-1}-\frac{1}{\lambda}M^{-1}A_pBA_p^T
\end{bmatrix}
$\vspace{5mm}

$
s(\lambda)= (\lambda+2) det(M^{-1})det(L_c + \lambda) \\
det
\begin{bmatrix}
\lambda D + \lambda^2 M + \lambda(L_c+\lambda I)^{-1}C^{-1}+L_p^B
\end{bmatrix}
$\vspace{5mm}

$
s(\lambda)= (\lambda+2) det(M^{-1}) \\
det
\begin{bmatrix}
\lambda (L_c + \lambda)D + \lambda^2 (L_c + \lambda)M + \lambda C^{-1}+ (L_c + \lambda)L_p^B
\end{bmatrix}
$\vspace{5mm}

$s(\lambda)= (\lambda+2) det(M^{-1}) det(H(\lambda))$, where $H(\lambda) = \begin{bmatrix}
(\lambda^2 D + \lambda^3 M + \lambda C^{-1}) + (\lambda L_cD+\lambda^2 L_c M + (2+\lambda)L_p^B)
\end{bmatrix}
$\vspace{5mm}

\subsection{Proof of Stability}\label{Two-Node-Stability_Subsec}

Roots of $s(\lambda)$ are $0$, $-2$ and roots $det(H(\lambda))$. Thus, it is enough to show that under the conditions in equations \ref{sufficient}, $det(H(\lambda))$ does not have a root on the right-hand side of the plane.

The necessary condition for $det(H(\lambda))=0$ is that there exists eigenvector $y \neq 0$ such that $H(\lambda)y = 0$. Therefore, $y^*H(\lambda)y = 0$. 

We show that under the conditions in equations \ref{sufficient}, for any $x \neq 0$, roots of $x^*H(\lambda)x = 0$ will be in the left-hand side of the plane; thus, it is a sufficient condition for the stability of our system.

Without loss of generality, we assume $x^*x=1$, and rewrite $x^*H(\lambda)x$ as follows.

$x^*H(\lambda)x =  a_0 + a_1 \lambda + a_2 \lambda^2 + a_3 \lambda^3 = 0$, where $a_0=x^*(2L_p^B)x$, $a_1=x^*(L_p^B + L_cD + C^{-1})x$, $a_2 = x^*(L_c M + D)x$ and $a_3 = x^*(M)x$.

Under the conditions in equations \ref{sufficient}, coefficients $a_1,a_2,a_3$ are all positive.

\begin{itemize}
	\item[-] $a_1=x^*(L_p^B + L_cD + C^{-1})x = x^*(L_p^B + \frac{1}{2}(L_cD + DL_c) + C^{-1})x$; $L_p^B$ is positive semidefinite, and $C^{-1}$ is positive definite. $\frac{1}{2}(L_cD + DL_c)$ is also positive definite by conditions in \ref{sufficient}.
	\item[-] $a_2 = x^*(L_c M + D)x = x^*(\frac{1}{2}(L_c M+M L_c) + D)x$; $D$ is a positive definite matrix; $\frac{1}{2}(L_c M+M L_c)$ is also positive definite by conditions in \ref{sufficient}.
	\item[-] $a_3 = x^*(M)x >0$, since $M$ is a positive definite matrix.
\end{itemize}

However, since $L_p^B$, the laplacian matrix of the power grid, is a positive semidefinite matrix, the coefficient $a_0$ will be nonnegative; i.e. $a_0=x^*(2L_p^B)x \geq 0$. We consider both cases where $a_0=0$ and $a_0>0$, and show that in each case, the roots of $det(H(\lambda))$ will be in the left-hand side of the plane.

\textbf{Case I:} Let $a_0=0$; Thus, $x^*H(\lambda)x =  a_1 \lambda + a_2 \lambda^2 + a_3 \lambda^3 = \lambda (a_1+a_2\lambda+a_3\lambda^2) = 0$. One root of the above equation is $\lambda = 0$, and since $a_1,a_2,a_3>0$, the other two roots will be in the left-hand side of the plane by the Routh-Hurwitz stability criteria.

\textbf{Case II:} Let $a_0>0$. By Routh-Hurwitz stability criteria, roots of $x^*H(\lambda)x$ will have negative real values, if $a_i>0$ for $i=0,1,2,3$, and $a_0 a_3 < a_1 a_2$.

$a_1a_2 = [x^*(L_p^B + L_cD + C^{-1})x ][x^*(L_c M + D)x] > [x^*(2 L_p^B)x][x^*(M)x] = a_0a_3$ if and only if 

$[\lambda_{min}(L_p^B + L_cD + C^{-1})][\lambda_{min}(L_c M + D)] > \lambda_{max}(2 L_p^B) \lambda_{max} (M)$

\section{Stability of Multi-Node System}\label{MultiNode_StabilityProof}
\subsection{Simplifying $s(\lambda)$ for Multi-Node System}

Let $D,M,C \in R^{N_P \times N_P}$ be the diagonal matrices of droop, inertia and cost values of the all nodes in the power grid.  Moreover, let $I \in R^{N_P \times N_P}$ be the identity matrix.

Suppose that we label the nodes such that nodes $N_P$ and $N_P-1$ be the nodes that have lost their communication link. Let $A_p$ be the adjacency matrix of the power grid. Moreover, let $L_c$ be the laplacian matrix of the communication network. Finally, define $ L_c^* =\begin{bmatrix} L_{c_1} \\0^{2\times N-2} | L_{c_2}C_2^{-1} \end{bmatrix}$, where \\
$L_{c_1}=L_c[1:N_P-2,1:N_P-2]$ and $L_{c_2}=\begin{bmatrix} 1 & -1 \\ -1 & 1 \end{bmatrix}$ be the laplacian matrix of a two-node system.

Then, the characteristic polynomial of our multi-node system is as follows.


$s(\lambda)=\lambda(\lambda+2)\\
det \begin{bmatrix}
-M^{-1}D-\lambda I^{N \times N}	  &-M^{-1}A_p 		& M^{-1}I^{N \times N}\\
BA_p^T    & -\lambda I^{E \times E}	& 0^{E \times N}	  \\
-C^{-1}   & 0^{N \times E} 	& -L^*_c C	-\lambda I^{N \times N}	
\end{bmatrix}
$\vspace{5mm}

Using the same techniques as in Section \ref{Two-Node-Simplify_Subsec}, the characteristic polynomial can be simplified as follows.

$
s(\lambda)= \lambda^{1+E}(\lambda+2) det(L^*_cC + \lambda) \\
\small{det
\begin{bmatrix}
-M^{-1}D-\lambda I - M^{-1}(L^*_cC+\lambda I)^{-1}C^{-1}-\frac{1}{\lambda}M^{-1}A_pBA_p^T
\end{bmatrix}}
$\vspace{5mm}



Let $L_p^B$ be the weighted laplacian matrix of the power grid, where $L_p^B = A_pBA_p^T$. 

Therefore, \\
$s(\lambda)=(-1)^{N} \lambda^{1+E-N}(\lambda+2) det(M^{-1}) det(H(\lambda))$, \\
where $H(\lambda)=
(\lambda^2 D + \lambda^3 M + \lambda C^{-1}) \\
+ (\lambda L^*_cCD+\lambda^2 L^*_c CM + (L^*_cC+\lambda)L_p^B)$
\vspace{5mm}

\subsection{Proof of Stability}

In this section, we claim that the following conditions are sufficient for the stability of the multi-node system.

\vspace{-5mm}
\begin{subequations}
	\begin{alignat}{1}
	&	M \succ 0  \label{cond1_Multi}\\
	&	\frac{1}{2}(L^*_cC M + MC L^{*T}_c) + D \succ 0 \label{cond2_Multi} \\
	&	L_p^B + \frac{1}{2}(L^*_cCD+DCL^{*T}_c) + C^{-1} \succ 0	 \label{cond3_Multi}\\
	&[\lambda_{min}(L_p^B + L^*_cCD + C^{-1})][\lambda_{min}(L^*_c CM + D)] \nonumber \\
	& > \lambda_{max}(L^*_c C L_p^B) \lambda_{max} (M) \label{cond4_Multi}
 	\end{alignat}
	\label{sufficient_Multinode}
\end{subequations}
\vspace{-5mm}

Similar to the two-node system, in order to prove the stability of the multi-node system, it is enough to show that the real-parts of all eigenvalues are negative. This is due to the fact that we have a linear system. Thus, we need to prove that the all the roots of $s(\lambda)$ are in the negative-side of the plane.

Since the power grid is a connected network, it contains a subtree; thus, $E\geq N-1$. Therefore, the roots of $s(\lambda)$ are $0$, $-2$ and roots $det(H(\lambda))$. Thus, it is enough to show that $det(H(\lambda))$ does not have a root on the right-hand side of the plane.

Similar to the two-node system, the necessary condition for $det(H(\lambda))=0$ is that there exists eigenvector $y \neq 0$ such that $H(\lambda)y = 0$. Therefore, $y^*H(\lambda)y = 0$.

We show that under conditions \ref{sufficient_Multinode}, for any $x \neq 0$, roots of $x^*H(\lambda)x = 0$ will be in the left-hand side of the plane; thus, these conditions are sufficient for the stability of our system.

Without loss of generality, we assume $x^*x=1$, and rewrite $x^*H(\lambda)x$ as follows.

$x^*H(\lambda)x =  a_0 + a_1 \lambda + a_2 \lambda^2 + a_3 \lambda^3 = 0$, where $a_0=x^*(L^*_c C L_p^B)x$, $a_1=x^*(L_p^B + L^*_cCD + C^{-1})x$, $a_2 = x^*(L^*_cC M + D)x$ and $a_3 = x^*(M)x$.

Similar to the two-node system, we first show that under conditions \ref{sufficient_Multinode}, coefficients $a_1,a_2,a_3$ are positive. 

\begin{itemize}
	\item[-] $a_1=x^*(L_p^B + L^*_cCD + C^{-1})x = x^*(L_p^B + \frac{1}{2}(L^*_cCD+DCL^{*T}_c) + C^{-1})x >0$. This is guarantees under condition \ref{cond3_Multi}.
	\item[-] $a_2 = x^*(L^*_cC M + D)x = x^*( \frac{1}{2} (L^*_cC M + MC L^{*T}_c) + D)x >0$. This is guarantees under condition \ref{cond2_Multi}.
	
	\item[-] $a_3 = x^*(M)x >0$, which is guaranteed under condition \ref{cond1_Multi}.
\end{itemize}

However, since $L_p^B$, the laplacian matrix of the power grid, is a positive semidefinite matrix, the coefficient $a_0$ will be nonnegative; i.e. $a_0=x^*(L^*_c C L_p^B)x \geq 0$. We consider both cases where $a_0=0$ and $a_0>0$, and find the sufficient conditions for each case under which the roots of $det(H(\lambda))$ are in the left-hand side of the plane.

\textbf{Case I:} Let $a_0=0$; Thus, $x^*H(\lambda)x =  a_1 \lambda + a_2 \lambda^2 + a_3 \lambda^3 = \lambda (a_1+a_2\lambda+a_3\lambda^2) = 0$. One root of the above equation is $\lambda = 0$, and since $a_1,a_2,a_3>0$, the other two roots will be in the left-hand side of the plane by the Routh-Hurwitz stability criteria.

\textbf{Case II:} Let $a_0>0$. By Routh-Hurwitz stability criteria, roots of $x^*H(\lambda)x$ will have negative real values, if $a_i>0$ for $i=0,1,2,3$, and $a_0 a_3 < a_1 a_2$.

$a_1a_2 = [x^*(L_p^B + L^*_cCD + C^{-1})x ][x^*(L^*_cC M + D)x] > [x^*(L^*_c C L_p^B)x][x^*(M)x] = a_0a_3$ if and only if 

$[\lambda_{min}(L_p^B + L^*_cCD + C^{-1})][\lambda_{min}(L^*_cC M + D)] > \lambda_{max}(L^*_c C L_p^B) \lambda_{max} (M)$


\end{document}